\newtheorem{theorem}{\bf Theorem}
\newtheorem{lemma}[theorem]{\bf Lemma}
\newtheorem{proposition}[theorem]{\bf Proposition}
\newtheorem{condition}[theorem]{\bf Condition}
\newtheorem{problem}[theorem]{\bf Problem}
\renewcommand{\Re}{\mathrm{Re}}
\renewcommand{\Im}{\mathrm{Im}}
\newcommand\RR{\mathbb{R}}
\newcommand\CC{\mathbb{C}}
\newcommand\ZZ{\mathbb{Z}}
\newcommand\ep{\epsilon}
\newcommand\ka{\tilde\kappa}
\newcommand\om{\tilde\omega}
\newcommand\kw{(\kappa,\omega)}
\newcommand\kwz{(\kappa_0,\omega_0)}
\newcommand\zwz{(0,\omega_0)}
\newcommand{\usc}{u^\text{\scriptsize sc}}
\newcommand{\uinc}{u^\text{\scriptsize inc}}
\newcommand{\winc}{w^\text{\scriptsize inc}}
\newcommand{\DtN}{N}
\newcommand\slantfrac[2]{\hspace{3pt}\!^{#1}\!\!\hspace{1pt}/ \hspace{2pt}\!\!_{#2}\!\hspace{3pt} }
\newcommand{\onehalf}{{\slantfrac{1}{2}}}
\newcommand{\half}{\text{\tiny $\textstyle{\frac{1}{2}}$}}
\newcommand{\mhalf}{\text{\tiny --$\textstyle{\frac{1}{2}}$}}
\newcommand{\Honeper}{{H^1_{\!\text{per}}(\Omega)}}
\newcommand{\mat}[5]{ \renewcommand{\arraystretch}{#1}
                    \left[\! \begin{array}{cc}
                            #2 & #3 \\
                            #4 & #5 \end{array} \!\right] }
\begin{document}

\begin{center}
{\bf \Large Total Resonant Transmission and Reflection}
\\ \vspace{1ex}
{\bf \Large by Periodic Structures}
\end{center}

\vspace{0.2ex}

\begin{center}
{\scshape \large Stephen P. Shipman, Hairui Tu}
\end{center}

\begin{center}
{\itshape Department of Mathematics, Louisiana State University\\
Baton Rouge, LA \ 70803}
\end{center}

\vspace{3ex}
\centerline{\parbox{0.9\textwidth}{
{\bf Abstract.}\
Resonant scattering of plane waves by a periodic slab under conditions close to those that support a guided mode is accompanied by sharp transmission anomalies.  For two-dimensional structures, we establish sufficient conditions, involving structural symmetry, under which these anomalies attain total transmission and total reflection at frequencies separated by an arbitrarily small amount.  The loci of total reflection and total transmission are real-analytic curves in frequency-wavenumber space that intersect quadratically at a single point corresponding to the guided mode.  A single anomaly or multiple anomalies can be excited by the interaction with a single guided mode.
}}

\vspace{3ex}
\noindent
\begin{mbox}
{\bf Key words:}  periodic slab, scattering, guided mode, transmission resonance, total reflection.

\end{mbox}

\vspace{3ex}
\hrule
\vspace{2ex}

\section{Introduction}

A dielectric slab with periodically varying structure can act both as a guide of electromagnetic waves as well as an obstacle that diffracts plane waves striking it transversely.  Ideal guided modes exponentially confined to a slab are inherently nonrobust objects, tending to become leaky, or to radiate energy, under perturbations of the structure or wavevector.  The leaky modes are manifest as high-amplitude resonant fields in the structure that are excited during the scattering of a plane wave.  One may think of this phenomenon loosely as the resonant interaction between guided modes of the slab and plane waves originating from exterior sources.  This interaction generates sharp anomalies in the transmittance, that is, in the fraction of energy of a plane wave transmitted across the slab as a function of frequency and wavevector.

In this work, we analyze a specific feature of transmission resonances for two-dimensional
lossless periodic structures (Fig.~\ref{fig:Slab}) that results from perturbation of the wavenumber from that of a true (exponentially confined to the structure) guided mode.  Graphs of transmittance \textit{vs.}
frequency $\omega$ and wavenumber $\kappa$ parallel to the slab typically exhibit a sharp peak
and dip near the parameters $\kwz$ of the guided mode.
Often in computations these extreme points appear to reach 100\% and 0\% transmission, which means that, between two closely spaced frequencies, the slab transitions from being completely transparent to being completely opaque (Fig.~\ref{fig:RodsTransmission}).
The main result, presented in Theorem \ref{Thm:Main}, is a proof that, if the slab is symmetric with respect to a line parallel to it (the $x$-axis in Fig.~\ref{fig:Slab}), then these extreme values are in fact attained.  Subject to technical conditions discussed later on, it can be paraphrased like this:

\smallskip
\noindent
{\bfseries Theorem.}\;
{\itshape Consider a two-dimensional lossless periodic slab that is symmetric about a line parallel to it.  If the slab admits a guided mode at an isolated wavenumber-frequency pair $\kwz$,
then total transmission and total reflection are achieved at nearby frequencies whose difference tends to zero as $\kappa$ tends to $\kappa_0$.  The loci in real $\kw$-space of total transmission and total reflection are real analytic curves that intersect tangentially at $\kwz$.
}
\smallskip

In this Theorem, it is important that the slab admits a guided mode at an {\em isolated} pair $\kwz$.  The frequency $\omega_0$ is above cutoff, meaning that it lies above the light cone in the first Brillouin zone of wavenumbers~$\kappa$ and therefore in the $\kw$ regime of scattering states (Fig.~\ref{fig:Diamond}).  Perturbing $\kappa$ from $\kappa_0$ destroys the guided mode and causes resonant scattering of plane waves at frequencies near $\omega_0$.  In the literature, one encounters these nonrobust guided modes at $\kappa_0=0$, that is, they are standing waves.  Although we are not aware of truly traveling guided modes  ($\kappa\not=0$) above cutoff in periodic photonic slabs, we believe that they should exist in anisotropic structures.

Resonant transmission anomalies are well known in a wide variety of applications in electromagnetics
and other instances of wave propagation, and a veritable plenitude of models and techniques has been
developed for describing and predicting them \cite{Martin-MorGarcia-VidLezec2001,EbbesenLezecGhaemi1998,LiuLalanne2008,MedinaMesaMarques2008}.  The causes of anomalies are manifold and include Fabry-Perot resonance and Wood anomalies near cutoff frequencies of Rayleigh-Bloch diffracted waves.  The present study addresses the specific resonant phenomenon associated with the interaction of plane waves with a guided mode of a slab structure.

In our point of view, one begins with the equations of electromagnetism (or acoustics, \textit{etc.}),
admitting no phenomenological assumptions that cannot be proved from them, and seeks to provide rigorous theorems on the phenomenon of resonant scattering.
A rigorous asymptotic formula for transmission anomalies in the case of perturbation of the angle of incidence (or Bloch wavevector) has been obtained through singular complex analytic perturbation of the scattering problem about the frequency and wavenumber of the guided mode for two-dimensional structures \cite{ShipmanVenakides2005,PtitsynaShipmanVenakides2008,Shipman2010}, and the analysis in this paper is based on that work.  The essential new result is Theorem~\ref{Thm:Main} on total reflection and transmission.  Previous analyticity results were based on boundary-integral formulations of the scattering problem, which are suitable for piecewise constant scatterers.  Here, we deal with general positive, coercive, bounded coefficients and thus give self-contained proofs of analyticity of a scattering operator based on a variational formulation of the scattering problem.

There are interesting open questions concerning the detailed nature of transmission resonances.
In passing from two-dimensional slabs (with one direction of periodicity) to three-dimensional
slabs (with two directions of periodicity), both the additional dimension of the wavevector parallel
to the slab as well as various modes of polarization of the incident field impart
considerable complexity to the guided-mode structure of the slab and its interaction with plane waves.
The role of structural perturbations is a mechanism for initiating coupling between guided
modes and radiation \cite{FanJoannopoul2002}
\cite[\S4.4]{HaiderShipmanVenakides2002} that also deserves a rigorous mathematical treatment.
A practical understanding of the correspondence between structural parameters and salient features of transmission anomalies, such as central frequency and width, would be valuable in applications.

\smallskip
The main theorem is proved in Section~\ref{Section:Single} in the simplest case in which the transmittance graph exhibits a single sharp peak and dip.  The proof rests on the complex analyticity with respect to frequency and wavenumber inherent in the problem of scattering of harmonic fields by a periodic slab.  The framework for our analysis is the variational (or weak-PDE) formulation of the scattering problem, which is reviewed in Section~\ref{Section:Background}.  Section~\ref{Section:Multiple} deals briefly with non-generic cases in which degenerate or multiple anomalies emanate from a single guided-mode frequency.  A number of graphs of transmittance in the generic and non-generic cases are shown in Section~\ref{Section:Graphs}.

%%%%%%%%%%%%%%%%%%%%%%%%%%%%%%%%%%%%%%%%%%%%%%%%%%%%%%%%
%%%%%%%%%%%%%%%%%%%%%%%%%%%%%%%%%%%%%%%%%%%%%%%%%%%%%%%%
%%%%%%%%%%%%%%%%%%%%%%%%%%%%%%%%%%%%%%%%%%%%%%%%%%%%%%%%
%%%%%   SECTION
%%%%%%%%%%%%%%%%%%%%%%%%%%%%%%%%%%%%%%%%%%%%%%%%%%%%%%%%
%%%%%%%%%%%%%%%%%%%%%%%%%%%%%%%%%%%%%%%%%%%%%%%%%%%%%%%%
%%%%%%%%%%%%%%%%%%%%%%%%%%%%%%%%%%%%%%%%%%%%%%%%%%%%%%%%

\section{Background: Scattering and guided modes}\label{Section:Background}

Readers familiar with variational formulations of scattering problems can easily skim this section and proceed to Section \ref{Section:Single}, which contains the main result.

A two-dimensional periodic dielectric slab (Fig.~\ref{fig:Slab}) is characterized by two coefficients $\ep(x,z)$
and $\mu(x,z)$, $(x,z)\in\RR^2$, that are periodic in the $x$-direction and constant outside of a strip parallel to the $x$ axis.  We take these coefficients to be bounded from below and above by positive numbers:
\begin{equation}\label{structure}
  \renewcommand{\arraystretch}{1.3}
\left.
  \begin{array}{lll}
    \ep(x+2\pi n,z) = \ep(x,z), & \mu(x+2\pi n,z) = \mu(x,z), & \text{for all }n\in\ZZ, \\
    \ep(x,z) = \ep_0, & \mu(x,z) = \mu_0, & \text{if } |z| \geq L, \\
   0< \ep_- < \ep(x,z) < \ep_+, & 0< \mu_- < \mu(x,z) < \mu_+. &
  \end{array}
\right.
\end{equation}

\begin{figure}  %%%%%%%%%%%%%%  FIGURES  %%%%%%%%%%%%%%%
\centering
\scalebox{0.25}{\includegraphics{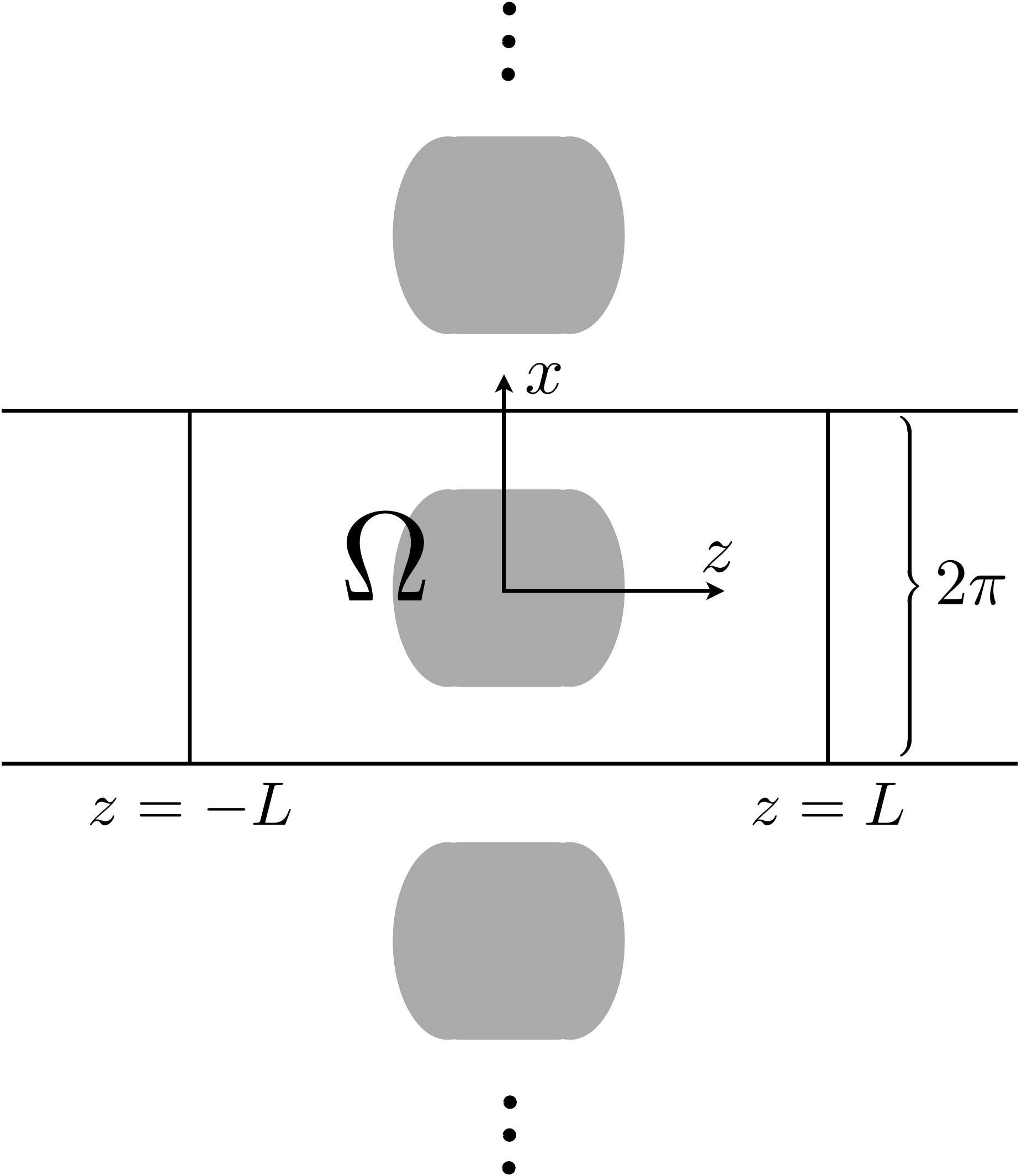}}
\caption{\footnotesize An example of a two-dimensional periodic slab.  One period truncated to the rectangle $[-\pi,\pi]\times[-L,L]$ is denoted by~$\Omega$.}
\centering
\label{fig:Slab}
\end{figure}

\begin{figure}
\centering
\scalebox{0.5}{\includegraphics{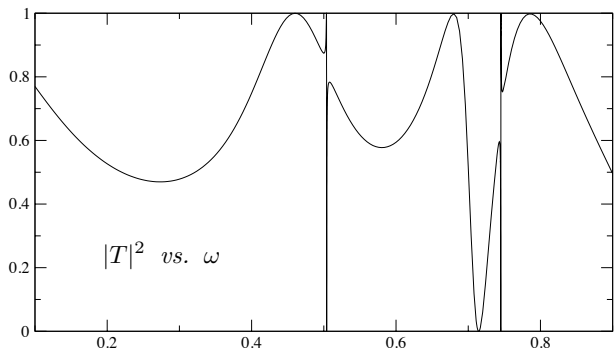}}
\caption{\footnotesize Numerical computation of the transmittance, or the percentage of energy transmitted across a penetrable waveguide of period $2\pi$ as a function of the frequency of the incident plane wave.  Here, the wavenumber in the $x$-direction (Fig.~\ref{fig:Slab}) is $\kappa=0.02$ and one period consists of a single circle of radius $\pi/2$ with $\epsilon=10$ and an ambient medium with $\epsilon=1$; $\mu=1$ throughout.  The structure supports guided modes at $\kw=(0,0.5039...)$ and $\kw=(0,0.7452...)$, both contained within the region $\cal D$ of one propagating diffractive order (Fig~\ref{fig:Diamond}).  Theorem \ref{Thm:Main} guarantees that the transmittance attains minimal and maximal values of 0\% and 100\% at each of the sharp anomalies near the guided-mode frequencies.  This example was computed using the boundary-integral method described in \cite{HaiderShipmanVenakides2002}.}
\centering
\label{fig:RodsTransmission}
\end{figure}

Physically, the structure is three-dimensional but invariant in the $y$-direction.
The Maxwell system for $y$-independent electromagnetic fields in such a structure decouples
into two polarizations and simplifies to the scalar wave equation for the out-of-plane components
of the $E$ field and the $H$ field independently.
We will consider harmonic fields, whose circular frequency~$\omega$ will always be taken to be positive.
Given a frequency $\omega$, plane waves and guided modes are characterized by their propagation constant $\kappa$ in the direction parallel to the slab.  The $y$ component $E_y$ of a harmonic $E$-polarized field with propagation constant $\kappa$ is of the pseudoperiodic form
\begin{eqnarray*}
  && E_y(x,z,t)=u(x,z)e^{i(\kappa x-\omega t)}, \\
  && u(x+2\pi n,z) = u(x,z) \;\;\text{for all }n\in\ZZ,
\end{eqnarray*}
in which the periodic factor $u$ satisfies the equation
\begin{equation}
(\nabla+i\boldsymbol{\kappa})\cdot\mu^{-1}(\nabla+i\boldsymbol{\kappa})u(x,z)+\epsilon\,\omega^2u(x,z)=0,
\end{equation}
with $\boldsymbol\kappa=(\kappa,0)$.
The number $\kappa$ can be restricted to lie in the Brillouin zone $\left[-\onehalf,\onehalf\right)$.
As long as the quantities
\begin{equation}
\eta_m^2=\epsilon_0\mu_0\omega^2-(m+\kappa)^2
\end{equation}
are nonzero for all integers $m$, the general solution of this equation admits a Fourier expansion on each side of the slab,
\begin{equation}\label{Fourier}
  u(x,z) = \sum_{m=-\infty}^\infty (A^\pm_m e^{i\eta_m z}+B^\pm_m e^{-i\eta_m z})e^{imx} \quad \mbox{for } \pm z>L,
\end{equation}
For real $\kappa$ and $\omega>0$, the square root is chosen with a branch cut on the negative imaginary axis, and the sign is taken such that $\eta_m=|\eta_m|$ if $\eta_m^2>0$ and $\eta_m=i|\eta_m|$ if $\eta_m^2<0$.

\subsection{Scattering and guided modes in periodic slabs}

The $L^2$ theory of guided modes underlies the analysis of resonance in Section~\ref{Section:Single}.  We present the pertinent elements of this theory here and refer the reader to more in-depth discussions in \cite{Bonnet-BeStarling1994,Shipman2010}.

In the problem of scattering of plane waves for real $\kw$, one takes in \eqref{Fourier} $A^-_m=B^+_m=0$ for the infinite set of $m$ such that $\eta^2_m<0$ to exclude fields that grow exponentially as $|z|\to\infty$.  The exponentially decaying Fourier harmonics are known as the evanescent diffraction orders.  The finitely many propagating diffractive orders ($\eta^2_m>0$) express the sum of the incident plane waves and the scattered field far from the slab.  In view of the factor $e^{-i\omega t}$ and the convention $\omega>0$, we see that $A^-_m$ and $B^+_m$ are the coefficients of inward traveling plane waves and $A^+_m$ and $B^-_m$ are the coefficients of outward traveling waves.  The linear orders $\eta_m=0$ correspond to ``grazing incidence", and will not play a role in the present study.

If for a given pair $\kw$, $\eta_m\not=0$ for all $m$, the numbers $\eta_m$ can be continued as analytic functions
in a complex neighborhood of $\kw\in\CC^2$.  The following outgoing condition is central to the mathematical
formulation of the scattering problem and the definition of generalized guided modes.
\begin{condition}[Outgoing Condition]\label{cond:outgoing}
A pseudo-periodic function $\tilde u(x,z)=u(x,z)e^{i\kappa x}$ is said to satisfy the {\em outgoing condition}
for the complex pair $\kw$, with $\Re(\omega)>0$ if there exist a real number $L$ and complex coefficients
$\left\{ c^\pm_m \right\}$ such that
\[
u(x,z)=\sum_{m\in\mathbb{Z}}c_m^{\pm}e^{\pm i\eta_mz}e^{imx} \quad \mbox{for } \pm z>L.
\]
\end{condition}

\begin{figure}  %%%%%%%%%%%%%%%  FIGURE  %%%%%%%%%%%%%%%%%%
\includegraphics[scale=0.27]{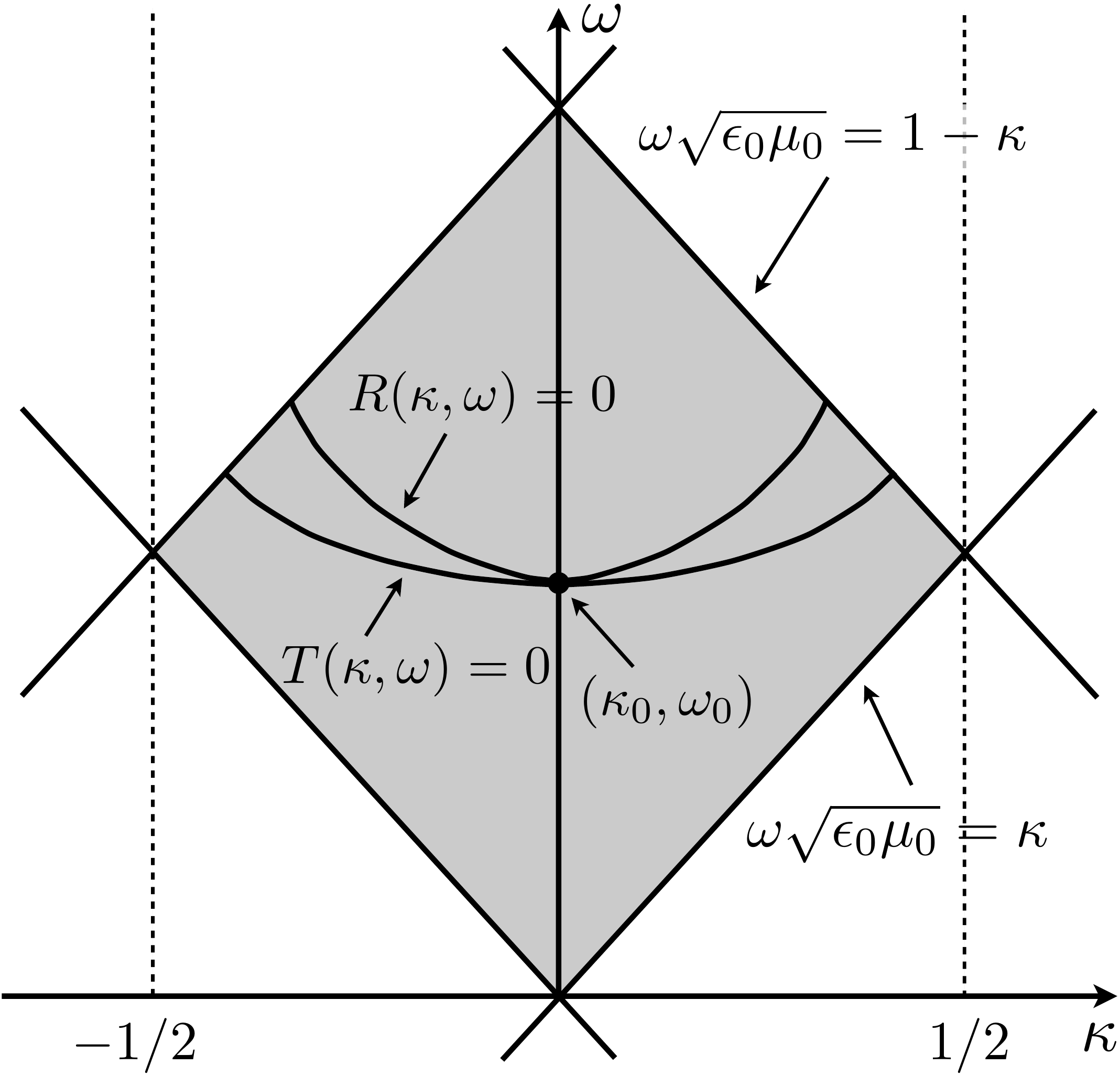}
 \centering
    \caption{\footnotesize The diamond ${\cal D}$ of one propagating diffractive order ($m=0$) within the first Brillouin zone is shown in grey.  The point $\kwz$ represents an isolated real pair on the complex dispersion relation for generalized guided modes.
    If the periodic slab is symmetric about a line parallel to it, as shown in Fig.~\ref{fig:Slab}, the main theorem (Theorem~\ref{Thm:Main}, Sec.~\ref{Section:Single}) guarantees that, under certain generic conditions, the zero sets of transmission, $T\kw = 0$ (or $b\kw=0$), and reflection, $R\kw = 0$ (or $a\kw=0$), intersect quadratically at the pair $\kwz$, where $T$ and $R$ are the complex transmission and reflection coefficients.  Since $|T|^2 + |R|^2=1$, one sees that $|T|$, being continuous in a punctured neighborhood of $\kwz$, achieves all values between $0$ and $1$ in each neighborhood of $\kwz$.  Each zero set either intersects the boundary of ${\cal D}$, as depicted here, or attains an infinite slope.}
    \label{fig:Diamond}
\end{figure}

We will be concerned with the case of exactly one propagating harmonic $m=0$.
This regime corresponds to real pairs $\kw$ that lie in the diamond
\begin{equation*}
  {\cal D} = \left\{ \kw \in\RR^2 : |\kappa|<\onehalf \text{ and } |\kappa| < \omega\sqrt{\ep_0\mu_0} < 1-|\kappa| \right\}
\end{equation*}
shown in Fig.~\ref{fig:Diamond}.
The numbers $\eta_m$ are analytic functions of $\kw$ in a complex neighborhood ${\cal D'}$ of ${\cal D}$; thus, $\RR^2\supset{\cal D}\subset{\cal D'}\subset\CC^2$.

The problem of scattering of plane waves by a periodic slab is the following.
\begin{problem}[Scattering problem]\label{problem:scattering}
Find a function $\tilde u(x,z)$ such that
\begin{equation}\label{outgoing}
\renewcommand{\arraystretch}{1.3}
\left\{
  \begin{array}{l}
\tilde u(x,z) = e^{i\kappa x}u(x,z), \quad\text{$u$ is $2\pi$-periodic in $x$},  \\
(\nabla+i\boldsymbol{\kappa})\!\cdot\!\mu^{-1}(\nabla+i\boldsymbol{\kappa})u(x,z)+\ep\,\omega^2\,u(x,z) =0,   \;\;\mbox{ for } (x,z)\in\mathbb{R}^2,\\
u(x,z) = \uinc(x,z) + \usc(x,z), \quad\mbox{$\usc$ satisfies the outgoing condition,}
    \end{array}
\right.
\end{equation}
in which $\uinc(x,z) = A_0e^{i\eta_0z}+B_0e^{-i\eta_0z}$.
\end{problem}
\noindent

The weak formulation of the scattering problem is posed in the truncated period
\begin{eqnarray*}
  && \Omega=\{(x,z):-\pi<x<\pi,\, -L<z<L \}, \\
  && \Gamma=\bigcup\Gamma_\pm, \quad \Gamma_\pm =\{(x,z):-\pi<x<\pi,\,z=\pm L\}.
\end{eqnarray*}
and makes use of the Dirichlet-to-Neumann map $\DtN=\DtN(\kappa,\omega)$ on the right and left boundaries $\Gamma_\pm$ that characterizes outgoing fields.  This is a bounded linear operator from $H^{\half}(\Gamma)$ to $H^{\mhalf}(\Gamma)$ defined as follows.  For any $f\in H^{\frac{1}{2}}(\Gamma)$, let $\hat f_m=(\hat f_m^+,\hat f_m^-)$ be
the Fourier coefficients of $f$, that is, $f(\pm L,x)=\sum_m\hat f_m^{\pm}e^{imx}$.
Then
\begin{equation}\label{DtN}
\DtN:H^{\frac{1}{2}}(\Gamma)\rightarrow H^{-\frac{1}{2}}(\Gamma),\quad
\widehat{(\DtN f)}_m=-i\eta_m\hat f_m.
\end{equation}
This operator has the property that
\begin{equation*}
  \partial_n u + \DtN u = 0 \;\text{ on }\; \Gamma \;\iff\; \text{$u$ is outgoing,}
\end{equation*}
where ``on $\Gamma$" refers to the traces of $u$ and its normal derivative on $\Gamma$.
In the periodic Sobolev space
\begin{equation*}
  \Honeper = \{u\in H^1(\Omega): u(\pi,z) = u(-\pi,z) \;\text{ for all } z\in(-L,L)\},
\end{equation*}
in which evaluation on the boundaries of $\Omega$ is in the sense of the trace, define the forms
\begin{eqnarray}
  && h(u,v) = h_{\kappa,\omega}(u,v) =\int_{\Omega}\mu^{-1}(\nabla+i\boldsymbol{\kappa})u\cdot(\nabla-i\boldsymbol\kappa)\bar v
                             \,+ \,\mu_0^{-1}\!\!\int_{\Gamma}(\DtN u)v, \label{a} \\
  && b(u,v)=\int_{\Omega}\!\ep\, u \bar v, \notag \\
  && p(v) = p_{\kappa,\omega}(v) =\mu_0^{-1}\!\!\int_{\Gamma}(\partial_n\uinc+\DtN\uinc)\bar v. \notag
\end{eqnarray}

\medskip
\noindent
{\em Remark on notation.}\ The forms $h$ and $p$ depend on $\kappa$ and $\omega$ through $N=N\kw$.  In the sequel, the dependence on $\kappa$ and $\omega$ of certain objects such as $h$, $p$, $N$, $\eta_m$, as well as the operators $A$ and $C_i$ defined below, will be often suppressed to simplify notation.

%%  the weak form of scattering
\begin{problem}[Scattering problem, variational form]\label{prob:weak1}
Given a pair $\kw$, find a function $u\in\Honeper$ such that
\begin{equation}\label{scatteringweak1}
  h(u,v)-\omega^2b(u,v)=p(v), \;\text{ for all } v\in\Honeper.
\end{equation}
\end{problem}

By definition, a {\em generalized guided mode} is a nonvanishing solution of
Problem \ref{prob:weak1} with $p$ set to zero.  Such a solution possesses no
incident field and therefore satisfies the outgoing Condition~\ref{cond:outgoing}.
If $\kw$ is a real pair, then all propagating harmonics in the Fourier expansion
\eqref{Fourier} of the mode must vanish and the field is therefore a true guided mode, which falls
off exponentially with distance from the slab.  This can be proved by integration by
parts, which yields a balance of incoming and outgoing energy flux.
In the case of one propagating harmonic, $\kw\in{\cal D}$, this means
\begin{equation*}
  |A^-_0|^2 + |B^+_0|^2 = |A^+_0|^2 + |B^-_0|^2.
\end{equation*}
Guided modes with $\omega\not=0$ are fundamental in the theory of leaky modes
\cite{PaddonYoung2000,TikhodeevYablonskiMuljarov2002,PengTamirBertoni1975,HausMiller1986} and always are exponentially growing as $|z|\to\infty$ and decaying in time.  The following theorem is proved in \cite[Thm.~5.2]{ShipmanVenakides2003} and \cite[Thm.~15]{Shipman2010}.

\begin{theorem}[Generalized modes] \label{theorem:GenGM}
Let $u$ be a generalized guided mode with $\kappa\in\RR$ (and $\Re(\omega)>0$).
Then $\Im(\omega)\le0$; and $u\rightarrow0$ as $|z|\rightarrow\infty$
if and only if $\omega$ is real.
\end{theorem}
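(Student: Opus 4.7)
My plan is an energy balance obtained by testing the PDE $(\nabla+i\boldsymbol{\kappa})\cdot\mu^{-1}(\nabla+i\boldsymbol{\kappa})u+\epsilon\omega^2 u=0$ against $\bar u$ over the truncated period $\Omega_Z=(-\pi,\pi)\times(-Z,Z)$ for arbitrary $Z>L$. Since $\kappa\in\RR$, one has $\overline{(\nabla+i\boldsymbol{\kappa})}=\nabla-i\boldsymbol{\kappa}$, and Green's identity (with boundary terms on $x=\pm\pi$ cancelling by periodicity of $u$) yields
\begin{equation*}
\int_{\Omega_Z}\mu^{-1}|(\nabla+i\boldsymbol{\kappa})u|^2 \;-\; \omega^2\int_{\Omega_Z}\epsilon|u|^2 \;=\; \mu_0^{-1}\int_{\Gamma_{\pm Z}}\bar u\,\partial_n u,
\end{equation*}
where $\Gamma_{\pm Z}=(-\pi,\pi)\times\{\pm Z\}$. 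Substituting the outgoing Fourier expansion $u(x,z)=\sum_m c_m^\pm e^{\pm i\eta_m z}e^{imx}$ (valid for $\pm z>L$) into the right-hand integral and using $L^2$-orthogonality of $\{e^{imx}\}$, only diagonal terms survive. The first term on the left is real, so taking imaginary parts yields the balance
\begin{equation*}
-2\Re(\omega)\Im(\omega)\int_{\Omega_Z}\epsilon|u|^2 \;=\; 2\pi\mu_0^{-1}\sum_{m\in\ZZ}\Re(\eta_m)\bigl(|c_m^+|^2+|c_m^-|^2\bigr)e^{-2\Im(\eta_m)Z}.
\end{equation*}

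With this identity in hand, I would prove $\Im(\omega)\le 0$ by contradiction. If $\Im(\omega)>0$, then $\Im(\eta_m^2)=2\epsilon_0\mu_0\Re(\omega)\Im(\omega)>0$ for every $m$, so $\arg(\eta_m^2)\in(0,\pi)$, and the branch convention (cut on the negative imaginary axis) places $\eta_m$ in the open first quadrant: $\Re(\eta_m)>0$ and $\Im(\eta_m)>0$. Every Fourier harmonic then decays exponentially, so $u\in L^2$ on the period strip; letting $Z\to\infty$ in the balance, the right side tends to zero while the left side tends to a strictly negative number, contradicting $u\not\equiv 0$.

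For the equivalence: when $\omega\in\RR$ the left side of the balance vanishes, and on the right $\Re(\eta_m)=0$ for every evanescent order. What remains is a sum over propagating $m$ of nonnegative quantities $\eta_m(|c_m^+|^2+|c_m^-|^2)$, which must therefore all vanish; only evanescent harmonics survive and $u$ decays exponentially in $|z|$. Conversely, assume $u\to 0$ and, for contradiction, that $\Im(\omega)<0$. The crux is to retrace the branch: with $\Im(\omega)<0$, the orders $m$ that were propagating at real $\omega$ now lie in the fourth quadrant ($\Re(\eta_m)>0$, $\Im(\eta_m)<0$), so their Fourier modes grow in $|z|$ and the hypothesis $u\to 0$ forces $c_m^\pm=0$ for such $m$; the formerly evanescent orders lie in the second quadrant ($\Re(\eta_m)<0$, $\Im(\eta_m)>0$) and decay. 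The balance then has a strictly positive left side (because $-2\Re(\omega)\Im(\omega)>0$ and $u\not\equiv 0$) while the right side is a sum of nonpositive terms that tends to zero as $Z\to\infty$, a contradiction. Hence $\omega\in\RR$.

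The one nontrivial ingredient is the quadrant bookkeeping for $\eta_m$ as $\omega$ moves off the real axis, under the stated branch convention; once that geometry is pinned down, every other step is standard energy-flux calculation.
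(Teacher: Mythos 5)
Your proof is correct and follows exactly the route the paper indicates for this result: integration by parts over a truncated period to obtain an energy-flux balance between the volume term $-\Im(\omega^2)\int\epsilon|u|^2$ and the boundary sum $\sum_m \Re(\eta_m)(|c_m^+|^2+|c_m^-|^2)e^{-2\Im(\eta_m)Z}$, followed by sign bookkeeping for $\eta_m$ under the stated branch convention. The paper itself only sketches this argument and defers the details to the cited references [ShipmanVenakides2003, Thm.~5.2] and [Shipman2010, Thm.~15], so your write-up is a faithful, correctly executed version of the intended proof; the quadrant analysis of $\eta_m$ as $\omega$ leaves the real axis is handled properly in both directions.
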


It is convenient to write the form $a-\omega^2b$ as
\[
h(u,v)-\omega^2b(u,v)=c_1(u,v)+c_2(u,v),
\]
in which $c_1(u,v)= h(u,v) + b(u,v)$ and $c_2(u,v)=-(\omega^2+1)b(u,v)$.
Both $c_1$ and $c_2$ are bounded forms in $\Honeper$.
If we take ${\cal D}'$ to be a sufficiently small complex neighborhood of the
diamond ${\cal D}$, $c_1$ is coercive for all $\kw$ in ${\cal D}'$.
These forms are represented by bounded operators $C_1$ and $C_2$ in $\Honeper$:
\[
(C_1u,v)_\Honeper=c_1(u,v),\]
\[
(C_2u,v)_\Honeper=c_2(u,v).
\]
If we denote by $\winc$ the unique element of $\Honeper$ such that
$(\winc,v)_{\Honeper}=p(v)$, the scattering problem becomes
$(C_1u,v)+(C_2u,v)=(\winc,v)$ for all $v\in\Honeper$, or
\begin{equation}\label{operatorform1}
  C_1u+C_2u=\winc.
\end{equation}
Because of the coercivity of $c_1$ and the compact embedding of $\Honeper$ into $L^2(\Omega)$, we have

\begin{lemma}
The operator $C_1$ has a bounded inverse and $C_2$ is compact.
\end{lemma}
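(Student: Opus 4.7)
The plan is to separate the two assertions and apply the standard tools suggested by the structure of the forms: the Lax--Milgram theorem for $C_1$ (exploiting the coercivity of $c_1$), and the Rellich--Kondrachov compact embedding $\Honeper\hookrightarrow L^2(\Omega)$ for $C_2$ (which only pairs its arguments through $L^2$).

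First, for the invertibility of $C_1$: since $c_1$ is bounded on $\Honeper$, one immediately has $\|C_1 u\|_{\Honeper} = \sup_{\|v\|\le 1}|c_1(u,v)| \le M\|u\|_{\Honeper}$, so $C_1$ is a bounded operator. The text asserts that, for ${\cal D}'$ chosen sufficiently small, there is a coercivity constant $\alpha>0$ with $|c_1(u,u)|\ge \alpha\|u\|_{\Honeper}^2$ uniformly in $\kw\in{\cal D}'$. Given any $w\in\Honeper$, the antilinear functional $v\mapsto (w,v)_{\Honeper}$ is bounded, so the Lax--Milgram theorem (in its non-Hermitian form, which is needed because the DtN contribution to $c_1$ is not symmetric) produces a unique $u\in\Honeper$ with $c_1(u,v)=(w,v)_{\Honeper}$ for all $v$. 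By the definition of $C_1$ this is the equation $C_1 u = w$, and the a priori estimate $\|u\|_{\Honeper}\le \alpha^{-1}\|w\|_{\Honeper}$ shows that $C_1^{-1}$ exists and satisfies $\|C_1^{-1}\|\le \alpha^{-1}$.

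For the compactness of $C_2$: write
\[
c_2(u,v) = -(\omega^2+1)\int_\Omega \epsilon\, u\,\bar v,
\]
so that $c_2$ factors through $L^2(\Omega)\times L^2(\Omega)$ with the bound $|c_2(u,v)|\le |\omega^2+1|\,\epsilon_+\,\|u\|_{L^2}\|v\|_{L^2}$. If $\{u_n\}$ is bounded in $\Honeper$, the compact embedding yields a subsequence $u_{n_k}$ that converges in $L^2(\Omega)$. Taking the supremum over $v\in\Honeper$ with $\|v\|_{\Honeper}\le 1$ and using $\|v\|_{L^2}\le \|v\|_{\Honeper}$ gives
\[
\|C_2 u_{n_k} - C_2 u_{n_j}\|_{\Honeper} \;\le\; |\omega^2+1|\,\epsilon_+\,\|u_{n_k}-u_{n_j}\|_{L^2(\Omega)},
\]
so $\{C_2 u_{n_k}\}$ is Cauchy in $\Honeper$. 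Hence $C_2$ is compact.

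There is no real obstacle here: the lemma is a direct consequence of the two ingredients already set up in the text, the coercivity of $c_1$ on ${\cal D}'$ and the compact embedding $\Honeper\hookrightarrow L^2(\Omega)$. The only minor subtlety is remembering that $c_1$ is not Hermitian (the DtN operator contributes the non-self-adjoint piece $-i\eta_m$ in each Fourier mode), which forces one to use the general Lax--Milgram statement rather than a Riesz representation argument in $(\Honeper,c_1(\cdot,\cdot))$.
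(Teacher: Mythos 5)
Your proof is correct and follows exactly the route the paper intends: the paper gives no detailed proof, simply stating that the lemma follows from the coercivity of $c_1$ on ${\cal D}'$ and the compact embedding of $\Honeper$ into $L^2(\Omega)$, which are precisely the two ingredients you use (via Lax--Milgram and the factorization of $c_2$ through $L^2$). Your remark about needing the non-Hermitian form of Lax--Milgram is a sensible detail the paper leaves implicit.
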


By means of the Fredholm alternative one can demonstrate that, even if a slab admits a guided mode
for a given real pair $\kw$, the problem of scattering of a plane wave always has a solution.
Proofs are given in \cite[Thm.~3.1]{Bonnet-BeStarling1994} and \cite[Thm.~9]{Shipman2010}; the idea
is essentially that plane waves contain only propagating harmonics whereas guided modes contain only
evanescent harmonics and are therefore orthogonal to any plain wave source field.

\begin{theorem} \label{theorem:ExistenceSc}
For each $\kw\in{\cal D}$, the scattering Problem \ref{prob:weak1} with a plane-wave source field has at least one solution and the set of solutions is an affine space of finite dimension equal to the dimension of the space of generalized guided modes. The far-field behavior of all solutions is identical.
\end{theorem}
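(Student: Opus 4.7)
The plan is to recast the scattering problem as a Fredholm equation of the second kind and verify the solvability condition explicitly. Since $c_1$ is coercive, $C_1$ has a bounded inverse, and \eqref{operatorform1} is equivalent to
\[
(I + K)u = C_1^{-1}\winc, \qquad K:=C_1^{-1}C_2,
\]
with $K$ compact, so $I+K$ is Fredholm of index zero on $\Honeper$. By definition $\ker(I+K)$ coincides with the space of generalized guided modes, which is therefore finite-dimensional, and the Fredholm alternative asserts solvability if and only if $C_1^{-1}\winc$ is $\Honeper$-orthogonal to $\ker(I+K)^*$.

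A short manipulation (substituting $v=(C_1^*)^{-1}w$) reduces this compatibility condition to the following: one must show $p(v)=0$ for every $v\in\Honeper$ satisfying
\[
(h-\omega^2 b)(u,v)=0\quad\text{for all }u\in\Honeper.
\]
The key structural fact I would establish next is that any such $v$ has vanishing $m=0$ Fourier coefficient on both components of $\Gamma$. Setting $u=v$ yields $h(v,v)=\omega^2 b(v,v)$. For real $\kw$ the volume integrals in both $h(v,v)$ and $b(v,v)$ are manifestly real, so the imaginary part of the boundary term $\mu_0^{-1}\int_\Gamma (Nv)\bar v$ must vanish. Expanding in Fourier series on $\Gamma_\pm$ and using that $\eta_0\in\RR_+$ while $\eta_m\in i\RR_+$ for $m\neq 0$, this imaginary part equals a positive multiple of $\eta_0(|\hat v_0^+|^2+|\hat v_0^-|^2)$, forcing $\hat v_0^+=\hat v_0^-=0$.

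To conclude, the source $\uinc=A_0 e^{i\eta_0 z}+B_0 e^{-i\eta_0 z}$ is independent of $x$, so both $\uinc|_\Gamma$ and $(\partial_n\uinc+N\uinc)|_\Gamma$ are supported entirely on the $m=0$ Fourier mode; by the previous step they are $L^2(\Gamma)$-orthogonal to $v|_\Gamma$, hence $p(v)=0$ and existence is proved. The affine structure and the dimension count are immediate from Fredholm theory together with the identification of $\ker(I+K)$ with the space of generalized guided modes. Finally, any two solutions differ by a generalized guided mode, which at real $\kw$ is a true guided mode by Theorem~\ref{theorem:GenGM} and therefore exponentially decaying, so its propagating amplitudes $A_0^\pm,B_0^\pm$ vanish and all solutions produce the same far field.

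The principal obstacle is the energy-balance step: one must verify that the volume parts of $h$ and $b$ are genuinely real-symmetric for real $\kw$, using the positivity of $\mu,\epsilon$ and the formally self-adjoint placement of $\boldsymbol\kappa$ in $(\nabla+i\boldsymbol\kappa)u\cdot(\nabla-i\boldsymbol\kappa)\bar v$, so that all imaginary-part information concentrates in the boundary integral. Given this, the decisive point is that $N$ fails to be self-adjoint only on the single propagating mode $m=0$—exactly the mode supporting the plane-wave source and forbidden in any true guided mode—so the Fredholm compatibility condition is satisfied automatically.
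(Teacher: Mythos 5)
Your proposal is correct and follows essentially the same route the paper takes (it defers the details to Bonnet-BenDhia--Starling and Shipman 2010, but the stated idea is exactly yours): Fredholm alternative for $I+C_1^{-1}C_2$, with the compatibility condition verified because the plane-wave source lives only in the propagating $m=0$ harmonic while elements of the (adjoint) kernel have vanishing $m=0$ trace coefficients by the energy-balance argument, and the far-field uniqueness follows from Theorem~\ref{theorem:GenGM}. The only quibble is a sign: the imaginary part of $\mu_0^{-1}\!\int_\Gamma (\DtN v)\bar v$ is a \emph{negative} multiple of $\eta_0(|\hat v_0^+|^2+|\hat v_0^-|^2)$, which of course forces the same conclusion.
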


With the notation
\begin{eqnarray*}
  && A\kw = I+C_1\kw^{-1}C_2\kw, \\
  && \psi=u \; \text{ and } \; \phi=C_1^{-1}\winc,
\end{eqnarray*}
equation \eqref{operatorform1} can be
written as
\begin{equation}\label{operatorform2}
A(\kappa,\omega)\psi(\kappa,\omega)=\phi(\kappa,\omega),
  \quad  \text{(Scattering problem in operator form)}
\end{equation}
in which $A$ is the identity plus a compact operator.
A generalized guided mode is a nontrivial solution of the homogeneous problem, in which $\phi$ is set to zero:
\begin{equation}
A(\kappa,\omega)\psi(\kappa,\omega)=0.
  \quad  \text{(Guided mode)}
\end{equation}

It can be proved that, if $\ep$ and $\mu$ are large enough in the structure and symmetric in the $x$ variable
(i.e., about the $z$-axis normal to the slab), there exists a guided mode at some point $\zwz$ in the diamond~${\cal D}$~\cite[\S5.1]{Bonnet-BeStarling1994}.  Such a guided mode is antisymmetric about the $z$-axis and its existence is proven through the decomposition of the operator $A(0,\omega)$ into its action on the subspaces of functions that are symmetric or antisymmetric with respect to $x$.  The symmetry of $A$ is broken when $\kappa$ is perturbed from zero with the consequent vanishing of the guided mode.

The frequency $\omega_0$ should be thought of as an embedded eigenvalue of the pseudo-periodic Helmholtz operator in the strip $\left\{ (z,x): -\pi<x<\pi \right\}$ at $\kappa=0$ which dissolves into the continuous spectrum $\omega\geq|\kappa|/\sqrt{\ep_0\mu_0}$ as $\kappa$ is perturbed.  It is the nonrobust nature of this eigenvalue that is responsible for the resonant scattering and transmission anomalies that we study in this paper.

%%%%%%%%%%%%%%%%%%%%%%%%%%%%%%%%%%%%%%%%%%%%%%%%%%%%%%%%
%%%%%%%%%%%%%%%%%%%%%%%%%%%%%%%%%%%%%%%%%%%%%%%%%%%%%%%%
%%%%%%%%%%%%%%%%%%%%%%%%%%%%%%%%%%%%%%%%%%%%%%%%%%%%%%%%
%%%%%   SECTION
%%%%%%%%%%%%%%%%%%%%%%%%%%%%%%%%%%%%%%%%%%%%%%%%%%%%%%%%
%%%%%%%%%%%%%%%%%%%%%%%%%%%%%%%%%%%%%%%%%%%%%%%%%%%%%%%%
%%%%%%%%%%%%%%%%%%%%%%%%%%%%%%%%%%%%%%%%%%%%%%%%%%%%%%%%

%\section{Analyticity}

\subsection{Analyticity}

Analysis of scattering resonance near the parameters $\kwz$ of a guided mode rests upon the analyticity of the operator $A$.  The proof of analyticity is in the Appendix.

\begin{lemma}\label{lemma:analyticityC1}
The operators $C_1$, $C_2$, and $A$ are analytic with respect to $\omega$ and $\kappa$, as long as
$\eta_m^2 = \omega^2\epsilon_0\mu_0-(\kappa+m)^2 \not=0$ for all $m\in\ZZ$, which holds in particular for $\kw\in{\cal D}$.
\end{lemma}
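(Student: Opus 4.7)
\ The plan is to reduce analyticity of $A=I+C_1^{-1}C_2$ to analyticity of the sesquilinear forms $c_1$ and $c_2$, isolate the only nontrivial ingredient (the Dirichlet-to-Neumann operator $N$), and control it by a Fourier-multiplier norm bound combined with Cauchy estimates. First, $c_2(u,v)=-(\omega^2+1)\int_\Omega\ep u\bar v$ depends on $\kw$ only through the polynomial prefactor, so $C_2$ is entire in $\omega$ and independent of $\kappa$. The volume part of $c_1=h+b$, namely $\int_\Omega\mu^{-1}(\nabla+i\boldsymbol\kappa)u\cdot(\nabla-i\boldsymbol\kappa)\bar v+\int_\Omega\ep u\bar v$, is a quadratic polynomial in $\kappa$ with $\kw$-independent coefficient forms, hence entire. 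What remains is the boundary term $\mu_0^{-1}\!\int_\Gamma(Nu)v$, and the whole lemma reduces to joint operator-norm analyticity of $N\kw:H^{1/2}(\Gamma)\to H^{-1/2}(\Gamma)$, the Fourier multiplier with symbol $\{-i\eta_m(\kappa,\omega)\}_{m\in\ZZ}$.

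The main obstacle is to upgrade pointwise analyticity of each scalar $\eta_m$ to joint analyticity of the whole operator-valued symbol. Since $\eta_m^2=\omega^2\ep_0\mu_0-(\kappa+m)^2$ is polynomial in $(\kappa,\omega)$ and, by hypothesis, $\eta_m^2\kwz\ne0$ for every $m$, each $\eta_m$ is individually analytic in some neighborhood of $\kwz$. The step to execute carefully is the construction of a \emph{common} complex polydisk $D$ about $\kwz$ on which all $\eta_m$ are simultaneously analytic and satisfy $|\eta_m|\le C\langle m\rangle$ with $C$ independent of $m$. For the tail $|m|>M$, with $M$ chosen large relative to the polyradius of $D$, one has $\eta_m^2\approx-m^2$ uniformly on $D$, well separated from both zero and the negative imaginary branch cut, so the branch of square root fixed at $\kwz$ continues analytically throughout $D$, and the bound $|\eta_m|\le C\langle m\rangle$ follows from $|\eta_m|^2\le|\omega|^2\ep_0\mu_0+|\kappa+m|^2$. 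For the finitely many $|m|\le M$, joint analyticity on a possibly smaller $D$ is automatic from pointwise analyticity.

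Finally I would assemble the pieces. The Fourier-multiplier norm obeys $\|T\|_{H^{1/2}(\Gamma)\to H^{-1/2}(\Gamma)}\le\sup_m(|\lambda_m|/\langle m\rangle)$, so the uniform bound on $\eta_m$ gives a uniform operator bound for $N$ on $D$, and Cauchy's estimates applied on $D$ to each scalar $\eta_m$ give $\|N_{kl}\|\le C/\rho^{k+l}$ for the $(k,l)$-th Taylor coefficient operator of $N$ about $\kwz$, where $\rho$ is the polyradius of $D$, uniformly in $m$. A geometric majorant then shows that $\sum N_{kl}(\kappa-\kappa_0)^k(\omega-\omega_0)^l$ converges in operator norm on the sub-polydisk of radius $<\rho$, establishing joint analyticity of $N$ and hence of $C_1$. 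Coercivity of $c_1$ on ${\cal D}'$ makes $C_1\kw$ boundedly invertible, a Neumann series about $\kwz$ gives analyticity of $C_1^{-1}$, and analyticity of $A=I+C_1^{-1}C_2$ follows. The uniform-in-$m$ control from the preceding paragraph is the essential piece of the argument.
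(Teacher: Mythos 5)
Your proposal is correct and follows essentially the same route as the paper: reduce everything to the Dirichlet-to-Neumann multiplier, establish the uniform growth bound $|\eta_m|\le C(m+1)$ on a common complex neighborhood via Cauchy estimates, and absorb that factor using the $H^{1/2}(\Gamma)\to H^{-1/2}(\Gamma)$ multiplier norm. The only cosmetic difference is that the paper verifies analyticity one variable at a time by showing the difference quotient converges in operator norm to an explicit candidate derivative (with the Taylor remainder controlled by the same Cauchy-integral bound on $\eta_m$), whereas you sum the full double Taylor series directly; the essential estimates are identical.
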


Assume that $A(\kappa,\omega)$ has a unique and simple eigenvalue
$\tilde\ell(\kappa,\omega)$ contained in a fixed disk centered at 0 in the complex $\lambda$-plane for
all $(\kappa,\omega)$ in a complex neighborhood of $\kwz\in{\cal D}\subset\mathbb{R}^2$.  (In fact, that $\kwz\in\mathbb{R}^2$ is not necessary for the present discussion.)
It will be convenient to work with $\ell=c\tilde\ell$ for a nonzero complex constant $c$ to be specified later.

Given a source field $\phi(\kappa,\omega)$ that is analytic at $\kwz$, consider the scattering problem
\begin{equation}\label{Ascattering}
  A(\kappa,\omega)\psi=\ell\phi.
\end{equation}
The analyticity of the field $\psi$ is proved in \cite[\S5.2]{Shipman2010}.  It analytically connects scattering states for $\ell\kw\not=0$ to generalized guided modes on the dispersion relation $\ell\kw=0$ near $\kwz$.

\begin{theorem}\label{theorem:AnaInverse}
The simple eigenvalue $\tilde\ell$ is analytic at $(\kappa_0,\omega_0)$, and, for any source field $\phi$ that is analytic at $(\kappa_0,\omega_0)$,
the solution $\psi(\kappa,\omega)$ is analytic at $(\kappa_0,\omega_0)$.
\end{theorem}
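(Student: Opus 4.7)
The plan is to apply the Riesz projection formalism for isolated simple eigenvalues of an analytic operator family and then solve $A\psi=\ell\phi=c\tilde\ell\phi$ block by block in the induced spectral decomposition. By Lemma~\ref{lemma:analyticityC1} the operator $A(\kappa,\omega)$ is analytic near $\kwz$, and by assumption $\tilde\ell$ is its unique, simple eigenvalue in a fixed disk $D\subset\CC$ centered at $0$; since $A$ is the identity plus a compact operator, the rest of the spectrum is discrete and stays off $\partial D$ on a small complex neighborhood of $\kwz$. Hence the spectral projection
\[
P(\kappa,\omega)=\frac{1}{2\pi i}\oint_{\partial D}(\lambda-A(\kappa,\omega))^{-1}\,d\lambda
\]
is an analytic rank-one projection, and picking $e_0\in\Honeper$ with $P(\kwz)e_0\neq 0$ gives an analytic, nonvanishing eigenvector $e(\kappa,\omega):=P(\kappa,\omega)e_0$. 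Choosing $f:=P(\kwz)e_0$ one has $(f,e(\kwz))_\Honeper\neq 0$, and from $Ae=\tilde\ell e$ we obtain
\[
\tilde\ell(\kappa,\omega)=\frac{(f,A(\kappa,\omega)e(\kappa,\omega))_\Honeper}{(f,e(\kappa,\omega))_\Honeper},
\]
which is manifestly analytic at $\kwz$.

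Next, set $Q:=I-P$. Since $P$ commutes with $A$, the subspace $Q\Honeper$ is $A$-invariant and the restriction $A|_{Q\Honeper}$ has spectrum in $\CC\setminus D$; in particular $0$ lies in its resolvent set, so $A|_{Q\Honeper}$ admits an analytic inverse $S(\kappa,\omega):Q\Honeper\to Q\Honeper$ (for instance via the reduced resolvent $\frac{1}{2\pi i}\oint_{\gamma}\lambda^{-1}(\lambda-A(\kappa,\omega))^{-1}\,d\lambda$ over a contour $\gamma$ enclosing the remaining spectrum). Decompose the candidate solution as $\psi=\alpha(\kappa,\omega)\,e+\psi_2$ with $\psi_2\in Q\Honeper$ and scalar $\alpha$, and project $A\psi=c\tilde\ell\phi$ onto each subspace. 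The $Q$-component gives $A\psi_2=c\tilde\ell\,Q\phi$, hence
\[
\psi_2=c\tilde\ell\,S(\kappa,\omega)\,Q\phi,
\]
which is analytic. Writing $P\phi=\beta(\kappa,\omega)\,e$ with the analytic scalar $\beta=(f,P\phi)_\Honeper/(f,e)_\Honeper$, the $P$-component reads $\tilde\ell\,\alpha\,e=c\tilde\ell\,\beta\,e$.

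Finally, define $\psi:=c\beta\,e+c\tilde\ell\,S Q\phi$ on a complex neighborhood of $\kwz$. This formula is analytic, and on the open set $\{\tilde\ell\neq 0\}$ it satisfies $A\psi=\ell\phi$ and therefore agrees with the unique scattering solution with the prescribed far field (Theorem~\ref{theorem:ExistenceSc}); by continuity it furnishes the required analytic extension across the guided-mode locus $\{\tilde\ell=0\}$. The main obstacle is precisely this extension: the $P$-component equation $\alpha\tilde\ell=c\tilde\ell\beta$ only pins down $\alpha=c\beta$ where $\tilde\ell\neq 0$, and it is exactly the scaling $\ell=c\tilde\ell$ that forces the source in the one-dimensional $P$-block to vanish at the same rate as the small eigenvalue of $A|_{P\Honeper}$, so that the apparent singularity in $\alpha$ is removable rather than a pole. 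All remaining pieces reduce to standard analytic-perturbation arguments for compact operators.
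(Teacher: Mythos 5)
Your proposal is correct and follows essentially the same route as the paper: the Riesz projection yielding an analytic rank-one eigenprojection and eigenvector, inversion of $A$ on the complementary invariant subspace, and the observation that scaling the source by $\ell=c\tilde\ell$ makes the coefficient of the resonant component analytic --- precisely the decomposition $\psi=c\alpha\hat\psi+\ell A_2^{-1}\phi_2$ that the paper writes down after the theorem (deferring full details to the cited reference). One small fix: since $(\cdot,\cdot)_\Honeper$ is conjugate-linear in its second slot, your eigenvalue formula should be written as $\tilde\ell=(A e,f)_\Honeper/(e,f)_\Honeper$ (or as $\mathrm{tr}(AP)$) so that the resulting function of $(\kappa,\omega)$ is analytic rather than anti-analytic.
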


The analytic connection between scattering states and guided modes, introduced in \cite{ShipmanVenakides2005}, is achieved as follows.  One analytically resolves the identity operator on $\Honeper$ through the Riesz projections,
\begin{eqnarray}
  && I = P_1+P_2,\label{resolution1} \\
  && P_1(\kappa,\omega)=\frac{1}{2\pi i}\oint_C(\lambda I-A(\kappa,\omega))^{-1}d\lambda,\label{resolution2}
\end{eqnarray}
where $C$ is a sufficiently small positively oriented circle centered at 0 in the complex $\lambda$-plane.  The image of $P_1$ is the one-dimensional eigenspace of $A(\kappa,\omega)$ corresponding to the eigenvalue $\tilde \ell(\kappa,\omega)$ if this eigenvalue lies within the circle $C$.  This eigenspace is spanned by the analytic eigenvector
\begin{equation*}
  \hat\psi\kw = P_1\kw\hat \psi\kwz,
\end{equation*}
in which $\hat\psi\kwz$ is an eigenvector of $A\kwz$ corresponding to $\tilde\ell\kwz$.
The resolution (\ref{resolution1},\ref{resolution2}) provides an analytic decomposition of the source field $\phi$ near $\kwz$ as
$\phi=\alpha\hat\psi+\phi_2$, with
\begin{equation}\label{sourcedecomposition}
  \alpha\hat\psi = P_1\phi, \qquad \phi_2 = P_2\phi.
\end{equation}
Now, letting $A_2$ denote the restriction of $A$ to the range of $P_2$, one observes that the field
\begin{equation}\label{solutiondecomposition}
  \psi=c\alpha\hat\psi+\ell A_2^{-1} \phi_2,
\end{equation}
solves $A\psi=\ell\phi$\,:
\[
A\psi=\begin{bmatrix}
\tilde \ell & 0\\
0 & AP_2
\end{bmatrix}
\begin{bmatrix}
c\alpha\hat\psi \\
\ell A_2^{-1}\phi_2
\end{bmatrix}
=\begin{bmatrix}
\ell\alpha\hat\psi\\
\ell\phi_2
\end{bmatrix}
=\ell\phi.
\]
The Riesz projection naturally decomposes the source and solution fields into ``resonant" and ``nonresonant" parts via (\ref{sourcedecomposition},\ref{solutiondecomposition}).

%%%%%%%%%%%%%%%%%%%%%%%%%%%%%%%%%%%%%%%%%%%%%%%%%%%%%%%%%%%
%%%%%%%%%%%%%%%%%%%%%%%%%%%%%%%%%%%%%%%%%%%%%%%%%%%%%%%%%%%
%%%%%%%%%%%%%%%%%%%%%%%%%%%%%%%%%%%%%%%%%%%%%%%%%%%%%%%%%%%
%%%%%%%%%  SECTION: SINGLE SYMMETRIC
%%%%%%%%%%%%%%%%%%%%%%%%%%%%%%%%%%%%%%%%%%%%%%%%%%%%%%%%%%%
%%%%%%%%%%%%%%%%%%%%%%%%%%%%%%%%%%%%%%%%%%%%%%%%%%%%%%%%%%%
%%%%%%%%%%%%%%%%%%%%%%%%%%%%%%%%%%%%%%%%%%%%%%%%%%%%%%%%%%%

\section{Resonant transmission for a symmetric slab} \label{Section:Single}

From now on, we will assume that the structure is symmetric about the $x$-axis.  Thus, in addition to the conditions \eqref{structure}, we assume also that $\epsilon(x,-z)=\epsilon(x,z)$ and $\mu(x,-z)=\mu(x,z)$ for all $x$.  We also assume that $\ell\kw$ is a simple (necessarily analytic) eigenvalue in a neighborhood of $\kwz\in{\cal D}$ and that $\ell\kwz=0$.

\subsection{The reduced scattering matrix} %%%%%%%%%%%%

For $\kw\in{\cal D}$, consider the problem of scattering of the field $e^{i(\kappa x+\eta_0z)}$ incident upon the slab on the left.  By Theorem \ref{theorem:ExistenceSc}, a solution exists and the difference of any two solutions is evanescent; in fact, near $\kwz$ the solution is unique if and only if $\ell\kw\not=0$.  Thus the propagating components of the periodic part $u$ of the solution $\tilde u = e^{i\kappa x}u$ are unique, resulting in well-defined complex reflection and transmission coefficients $R$ and $T$,
\begin{eqnarray*}
  && u = e^{i\eta_0 z} + Re^{-i\eta_0 z} + \sum_{m\not=0} c_m^- e^{i(m x-\eta_m z)} \quad \text{for $z\leq -L$}, \\
  && u = Te^{i\eta_0 z} + \sum_{m\not=0} c_m^+ e^{i(m x+\eta_m z)} \quad \text{for $z\geq L$}.
\end{eqnarray*}
Because of the symmetry of the structure with respect to $z$, an incident field $e^{i(\kappa x-\eta_0z)}$ from the right produces identical reflection and transmission coefficients.  Thus the {\em reduced scattering matrix} for the structure for $\kw\in{\cal D}$ is
\begin{equation*}
  S\kw = \mat{1.2}{T\kw}{R\kw}{R\kw}{T\kw},
\end{equation*}
which gives the outward propagating components in terms of the inward propagating components in the expression (\ref{Fourier}) via
$S(A_0^-,B_0^+)^t=(A_0^+,B_0^-)^t$.

In terms of the transmission coefficient $T$, we define the {\itshape transmittance} to be the fraction of energy flux that is transmitted across the slab.  The transmittance is the quantitiy $|T|^2$, which lies in the interval $[0,1]$.

Let us now take the incident field from the left to be $\ell\kw e^{i(\kappa x+\eta_0 z)}$, which results in a reflected field $a\kw e^{i(\kappa x-\eta_0 z)}$ for $z\to-\infty$ and a transmitted field $b\kw e^{i(\kappa x+\eta_0 z)}$ for $z\to\infty$, with coefficients given by
\begin{equation}\label{ab}
  a=\ell R, \;\; b=\ell T.
\end{equation}
By the structural symmetry, an incident field $\ell\kw e^{i(\kappa x-\eta_0 z)}$ from the right results in a reflected field $a\kw e^{i(\kappa x+\eta_0 z)}$ for $z\to\infty$ and $b\kw e^{i(\kappa x-\eta_0 z)}$ for $z\to-\infty$, with coefficients also given by~\eqref{ab}.
The utility of working with $a$ and $b$ is that they are analytic, whereas $R$ and $T$ are not analytic at points $\kw$ where $\ell=0$\,.

\begin{lemma}  %%%%%%% LEMMA %%%%%%%
The coefficients $a\kw$ and $b\kw$ are analytic in $\kappa$ and $\omega$.
\end{lemma}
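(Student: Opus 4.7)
The plan is to exploit the analytic solution machinery of Theorem~\ref{theorem:AnaInverse} by viewing the scattering problem for the rescaled incident field $\ell(\kappa,\omega)\,e^{i(\kappa x+\eta_0 z)}$ as an instance of the operator equation $A\psi = \ell\phi$ with an analytic source. Because the form $p$, and hence the source $\winc$, depends linearly on the amplitude of $\uinc$, scaling the unit-amplitude incident wave by $\ell$ scales the source by $\ell$. Letting $\phi_0 := C_1^{-1}\winc_0$ denote the source associated with the unit-amplitude field $e^{i(\kappa x+\eta_0 z)}$, Lemma~\ref{lemma:analyticityC1} together with the explicit analyticity of $\eta_0$ on ${\cal D}'$ shows that $\phi_0$ is analytic in $\kw$ near $\kwz$, so the physical problem for the incident field $\ell e^{i(\kappa x+\eta_0 z)}$ reads $A\psi = \ell \phi_0$.

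By Theorem~\ref{theorem:AnaInverse}, the solution $\psi$ furnished by the Riesz-projection decomposition (\ref{solutiondecomposition}) is analytic on a neighborhood of $\kwz$. At any nearby point where $\ell(\kappa,\omega) \neq 0$, the scattering problem with amplitude-$\ell$ incident field has a unique solution, so this analytic $\psi$ must coincide with $\ell u$, where $u$ is the unit-amplitude scattering field. Consequently, the far-field Fourier expansions of the periodic factor of $\psi$ read
\begin{equation*}
e^{-i\kappa x}\psi(x,z) = \ell\, e^{i\eta_0 z} + a(\kappa,\omega)\, e^{-i\eta_0 z} + (\text{evanescent}) \quad (z\le -L),
\end{equation*}
\begin{equation*}
e^{-i\kappa x}\psi(x,z) = b(\kappa,\omega)\, e^{i\eta_0 z} + (\text{evanescent}) \quad (z\ge L).
\end{equation*}

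It then suffices to extract $a$ and $b$ as analytic functionals of $\psi$. Evaluating the zeroth Fourier coefficient of the trace of $e^{-i\kappa x}\psi$ on $\Gamma_\pm$---a bounded linear, hence analytic, operation from $\Honeper$ into $\CC$---gives $\hat u_0(L) = b\, e^{i\eta_0 L}$ and $\hat u_0(-L) = \ell\, e^{-i\eta_0 L} + a\, e^{i\eta_0 L}$. Since $\eta_0$ and $\ell$ are analytic and $e^{\pm i\eta_0 L}$ is nonvanishing, solving these two scalar equations for $a$ and $b$ expresses them as analytic functions of $\kw$ on a neighborhood of $\kwz$.

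The main conceptual hurdle is that $\psi$ is produced by the abstract Riesz-projection construction, and one must verify that (i) it truly equals $\ell u$ at points where $\ell\neq 0$, and (ii) its zeroth-order far-field components remain well defined and equal to $a,b$ even on the guided-mode locus $\ell=0$, where the underlying scattering problem is solvable only up to evanescent modes. Both concerns are resolved by Theorem~\ref{theorem:ExistenceSc}: the propagating far-field coefficients of any scattering solution are uniquely determined regardless of the evanescent ambiguity, so the Fourier-extraction procedure applied to the particular analytic representative $\psi$ produces the well-defined coefficients $a$ and $b$ and transfers analyticity to them.
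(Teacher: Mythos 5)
Your proposal is correct and follows essentially the same route as the paper: treat the incident field $\ell\kw e^{i(\kappa x+\eta_0 z)}$ as producing an analytic source in $A\psi=\ell\phi$, invoke Theorem~\ref{theorem:AnaInverse} for analyticity of $\psi$ in $\Honeper$, and then extract $a$ and $b$ as compositions of bounded linear functionals (zeroth Fourier coefficients of the traces on $\Gamma_\pm$) with the analytic functions $\eta_0$ and $\ell$. Your extra care in identifying the Riesz-projection solution with $\ell u$ and in handling the incident contribution on the left boundary only tightens details the paper leaves implicit.
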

\begin{proof}
The analyticity of the incident field $\ell\kw e^{i(\kappa x+\eta_0 z)}$ implies the analyticity of the source field $\ell\kw\phi\kw$ in the equation $A(\kappa,\omega)\psi=\ell\phi$ and hence, by Theorem \ref{theorem:AnaInverse}, also the analyticity of the solution field $\psi\kw=u(x,z;\kappa,\omega)$ in $\Honeper$.  The coefficients $a\kw$ and $b\kw$ of $\psi$ are given by
\begin{eqnarray*}
  && a\kw = \frac{e^{i\eta_0\kw L}}{2\pi} \int_0^{2\pi} u(x,-L;\kappa,\omega) dx, \\
  && b\kw = \frac{e^{-i\eta_0\kw L}}{2\pi} \int_0^{2\pi} u(x,L;\kappa,\omega) dx,
\end{eqnarray*}
and since $\eta_0\kw$ is analytic and $u\mapsto\int_0^{2\pi}u(x,\pm L)dx$ are bounded linear functionals on $\Honeper$, both $a$ and $b$ are analytic.
\end{proof}
This lemma provides a representation of the scattering matrix as the ratio of analytic functions in a complex neighborhood of $\cal D$, except at points of the dispersion relation $\ell\kw=0$,
\begin{equation}\label{ScatteringMatrix}
  S\kw = \frac{1}{\ell\kw} \mat{1.2}{b\kw}{a\kw}{a\kw}{b\kw}.
\end{equation}
Assuming that $\ell=0$ at an isolated point $\kwz$ of $\cal D$, we see that, in a real punctured neighborhood of $\kwz$, $S\kw$ is a complex-valued real-analytic function.  In fact, for real $\kw$, $S\kw$ is unitary, a standard fact that is shown by integration by parts:
\begin{equation} \label{eqn:S_unitary1}
\begin{cases}
|\ell|^2=|a|^2+|b|^2 ,\\
a\bar b+\bar a b=0.
\end{cases}
\end{equation}
This implies, in particular, that three analytic functions vanish at $\kwz$:
\begin{equation*}
  \ell\kwz = a\kwz = b\kwz = 0,
\end{equation*}
which is the feature that leads to the sensitive behavior of the transmission and reflection coefficients $|b/\ell|^2$ and $|a/\ell|^2$ near $\kwz$.

Now, each of $a\kw$ and $b\kw$ is a complex functions of two complex variables, and a 

In this section, we analyze the generic case
\begin{equation}\label{generic}
  \frac{\partial\ell}{\partial\omega}\ne0, \;\;
\frac{\partial a}{\partial\omega}\ne0, \;\;
\frac{\partial b}{\partial\omega}\ne0 \;\; \text{ at } \kwz.
\end{equation}
The Weierstra{\ss} Preparation Theorem tells us that the zero-sets of $a$, $b$, and $\ell$ are graphs of analytic functions of $\kappa$ near $\kappa_0$.
Let $\ka=\kappa-\kappa_0$ and $\om=\omega-\omega_0$.
With the appropriate choice of $c$ in $\ell=c\tilde\ell$, the Theorem yields the following factorizations:
\begin{equation} \label{eqn:single_expansion}
\begin{aligned}
a\kw&=(\om+r_1\ka+r_2\ka^2+\cdots)(r_0e^{i\gamma}+r_{\ka}\ka+r_{\om}\om+O(|\ka|^2+|\om|^2)),\\
b\kw&=(\om+t_1\ka+t_2\ka^2+\cdots)(it_0e^{i\gamma}+t_{\ka}\ka+t_{\om}\om+O(|\ka|^2+|\om|^2)),\\
\ell\kw&=(\om+\ell_1\ka+\ell_2\ka^2+\cdots)(1+\ell_{\ka}\ka+\ell_{\om}\om+O(|\ka|^2+|\om|^2)),
\end{aligned}
\end{equation}
in which $0<r_0<1$ and either $0<t_0<1$ or $-1<t_0<0$\,; the symbols $r_{\ka}$, $r_{\om}$, {\it etc.}, refer to constants.  One shows that the same unitary number $e^{i\gamma}$ appears in the second factors of both $a$ and $b$ by using the second expression in \eqref{eqn:S_unitary1}.  The zero-set of the first factor of each in each of these expressions coincides with the zero set of the corresponding function $a$, $b$, or $\ell$ near $\kwz$.

%%%%%%%%%%%%%%%%%%%%%%%%%%%%%%%%%%%%%%%%%%%%%%%%%%%
%%%%%  THE ALTERNATIVES
%%%%%%%%%%%%%%%%%%%%%%%%%%%%%%%%%%%%%%%%%%%%%%%%%%%

\smallskip
Under these conditions, one can deduce several properties of the coefficients; see \cite{ShipmanVenakides2005} and \cite[Theorem~10]{Shipman2010} for proofs.
\begin{lemma}\label{lemma:single}
The following relations hold among the coefficients in the form \eqref{eqn:single_expansion}:

\smallskip
\noindent
(i) $r_0^2+t_0^2=1$,\\
(ii) $\ell_1=r_1=t_1\in\RR$,\\
(iii) $\Im(\ell_2)\ge 0$,\\
(iv) $\ell_2\in\RR\iff r_2=t_2\in\RR\iff \ell_2=r_2=t_2\in\RR$.
\end{lemma}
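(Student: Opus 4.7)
My approach is to exploit the two real-analytic identities implied by the unitarity of $S$ on the real slice of $\kw$-space,
$$|\ell|^2 = |a|^2 + |b|^2 \qquad \text{and} \qquad a\bar b + \bar a b = 0,$$
together with Theorem \ref{theorem:GenGM}, which forces $\Im\omega\le 0$ for generalized guided modes with real $\kappa$. The plan is to substitute the Weierstra{\ss} factorizations \eqref{eqn:single_expansion} into these identities and match coefficients, writing $a = p_a f_a$ with $p_a = \om+r_1\ka+r_2\ka^2+\cdots$ and $f_a(\kwz)=r_0 e^{i\gamma}$, and analogously $b = p_b f_b$ with $f_b(\kwz)=it_0 e^{i\gamma}$ and $\ell = p_\ell f_\ell$ with $f_\ell(\kwz)=1$. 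The useful structural fact is that each $p$ is linear in $\om$ and vanishes at $\kwz$, so on the real slice the $|p|^2$'s have no monomials in $\ka$ alone of low enough degree to couple with the non-leading terms of $|f|^2$ at the bi-orders we need; only the values $|f|^2|_{\kwz}\in\{r_0^2,t_0^2,1\}$ will enter.

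Part (i) drops out of the $\om^2$ coefficient of $|\ell|^2=|a|^2+|b|^2$ at $\ka=0$. For the reality of $\ell_1$ in (ii), the root of $p_\ell=0$ is $\omega_\ell(\ka) = -\ell_1\ka-\ell_2\ka^2-\cdots$, and Theorem \ref{theorem:GenGM} forces $\Im\omega_\ell(\ka)\le 0$ for small real $\ka$; the $\ka$ coefficient then gives $\Im\ell_1=0$ and the $\ka^2$ coefficient gives $\Im\ell_2\ge 0$, which also settles (iii). Next, matching the $\om\ka$ and $\ka^2$ coefficients of $|\ell|^2=|a|^2+|b|^2$ on the real slice yields
$$\Re\ell_1 = r_0^2\Re r_1 + t_0^2\Re t_1, \qquad |\ell_1|^2 = r_0^2|r_1|^2 + t_0^2|t_1|^2.$$
Combining $\ell_1\in\RR$ with Jensen's inequality applied to the convex weights $r_0^2$ and $t_0^2$ (which sum to $1$ by (i)), the chain
$$\ell_1^2 = \bigl(r_0^2\Re r_1+t_0^2\Re t_1\bigr)^2 \;\le\; r_0^2(\Re r_1)^2+t_0^2(\Re t_1)^2 \;\le\; r_0^2|r_1|^2+t_0^2|t_1|^2 = \ell_1^2$$
must degenerate to equality, forcing $\Re r_1=\Re t_1$ from the first inequality and $\Im r_1=\Im t_1=0$ from the second, so $r_1=t_1=\ell_1\in\RR$.

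For (iv), with $\rho:=\ell_1=r_1=t_1\in\RR$ in hand, I introduce the shift $\mu:=\om+\rho\ka$, so that $p_a=\mu+r_2\ka^2+O(\ka^3)$ and likewise for $p_b,p_\ell$. Matching the $\mu\ka^2$ and $\ka^4$ coefficients of $|\ell|^2=|a|^2+|b|^2$ on the real slice yields the analogous pair
$$\Re\ell_2 = r_0^2\Re r_2+t_0^2\Re t_2, \qquad |\ell_2|^2 = r_0^2|r_2|^2+t_0^2|t_2|^2.$$
If $\ell_2\in\RR$, the same Jensen chain as in (ii) collapses to equality and forces $r_2=t_2=\ell_2\in\RR$. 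Conversely, if $r_2=t_2\in\RR$, the two identities read $\Re\ell_2=r_2$ and $|\ell_2|^2=r_2^2$, which force $\Im\ell_2=0$ and thus $\ell_2=r_2=t_2\in\RR$.

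The main technical point, and the place where I expect the most care is required, is verifying that the nontrivial analytic expansions of $|f_a|^2,|f_b|^2,|f_\ell|^2$ in $(\ka,\om)$ do not contaminate the bi-order coefficients extracted above. The linearity of $p$ in $\om$ together with $p(\kwz)=0$ guarantees that $|p_a|^2$, viewed in $(\om,\ka)$ or (after the shift) in $(\mu,\ka)$, has no monomials $\mu^i\ka^j$ for sufficiently small $i+j$; the non-leading coefficients of $|f|^2$ therefore multiply only higher-order $|p|^2$-terms and cannot enter the specific bi-orders $\om\ka,\ka^2,\mu\ka^2,\ka^4$ that I extract. The second unitarity relation $a\bar b+\bar a b=0$ is not strictly needed for this plan, but on the real slice it provides a consistency check through the conclusion $p_a\bar p_b\in\RR$.
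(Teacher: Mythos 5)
Your proposal is correct and follows essentially the same route the paper takes: the paper defers the proof of Lemma~\ref{lemma:single} to its references, but the technique it uses in the Appendix for the companion Lemma~\ref{prop:singlesym_alt} is exactly your strategy of matching low-order Taylor coefficients of $|\ell|^2=|a|^2+|b|^2$ on the real slice (including the restriction to $\om+\ell_1\ka=0$, which is your shift $\mu=\om+\ell_1\ka$), combined with Theorem~\ref{theorem:GenGM} to get $\Im(\ell_1)=0$ and $\Im(\ell_2)\ge0$ from the sign of $\Im(\omega)$ along the dispersion curve. Your packaging of the equalities $r_1=t_1=\ell_1$ and of part (iv) via the equality case of Jensen's inequality with weights $r_0^2+t_0^2=1$ is a clean and valid way to extract what the references obtain by direct coefficient comparison, and your bookkeeping of which bi-degrees the unit factors $|f|^2$ can contaminate is sound.
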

When $\kappa_0=0$, the coefficient $\ell_1$ necessarily vanishes because of symmetry of the dispersion relation $\ell\kw=0$ in $\kappa$.
Whether $\ell_1\neq0$ can be realized for dielectric slabs remains an open problem.  Nevertheless, we do not assume $\ell_1=0$.
We also assume that $\Im(\ell_2)>0$, which is sufficient to guarantee that $\kwz$ is an isolated point of $\ell\kw=0$ in $\cal D$.

The proof of the following technical lemma is in the Appendix.
\begin{lemma}\label{prop:singlesym_alt}
Under the assumptions \eqref{generic} and $\Im(\ell_2)>0$,
one of the following alternatives is satisfied:
\smallskip

\noindent
(i) $r_2$ and $t_2$ are distinct real numbers; \\
(ii) $r_2=t_2\not\in\RR$ and either $\ell_2=r_2=t_2$ or $\ell_2=\bar r_2=\bar t_2$.
\end{lemma}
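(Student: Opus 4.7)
The plan is to exploit the unitarity identities~\eqref{eqn:S_unitary1}---$|\ell|^2=|a|^2+|b|^2$ and $a\bar b+\bar a b=0$ on real $\kw\in{\cal D}$---together with the Weierstrass factorizations~\eqref{eqn:single_expansion}, and to match leading-order Taylor coefficients in $\ka$. Since $a$, $b$, $\ell$ are holomorphic, each real-analytic identity extends holomorphically to a complex neighborhood of $\kwz$ in $\CC^2$, with $\bar f(\ka,\om)$ reinterpreted as the ``conjugate'' holomorphic companion $f^{\mathrm{conj}}(\ka,\om):=\overline{f(\bar\ka,\bar\om)}$, obtained from $f$ by conjugating its Taylor coefficients. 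Using~\eqref{eqn:single_expansion}, write $P_a=\om-\omega_a(\ka)$, $P_b=\om-\omega_b(\ka)$, $P_\ell=\om-\omega_\ell(\ka)$ for the Weierstrass polynomials, with graphs $\omega_a(\ka)=-r_1\ka-r_2\ka^2-\cdots$, etc., equal to the zero sets of $a$, $b$, $\ell$; denote the unit factors by $Q_a$, $Q_b$, $Q_\ell$, whose values at $\kwz$ are known from~\eqref{eqn:single_expansion}.

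The first step is to substitute $\om=\omega_a(\ka)$ into the holomorphic extension of $a\bar b+\bar a b=0$. The $P_a$ factor of the first summand vanishes identically; in the second summand, $P_b(\ka,\omega_a)=-(r_2-t_2)\ka^2+O(\ka^3)$ and $\bar P_a(\ka,\omega_a)=\omega_a-\bar\omega_a=-(r_2-\bar r_2)\ka^2+O(\ka^3)$, while the remaining $Q$-product contributes the nonzero constant $ir_0t_0$ at $\kwz$. Matching the leading $\ka^4$ coefficient yields $(r_2-t_2)(r_2-\bar r_2)=0$. Substituting instead $\om=\omega_b(\ka)$ yields, by the same mechanism, $(t_2-r_2)(t_2-\bar t_2)=0$. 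Together these force the alternative: either $r_2=t_2$, or both $r_2,t_2\in\RR$.

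The second step is to substitute $\om=\omega_\ell(\ka)$ into the holomorphic extension of $|\ell|^2=|a|^2+|b|^2$. The left side vanishes through its $P_\ell$ factor, and comparing leading $\ka^4$ coefficients on the right, with $|Q_a(0,0)|^2=r_0^2$ and $|Q_b(0,0)|^2=t_0^2$, produces the key identity
\begin{equation}\label{eq:plan-key}
r_0^2(\ell_2-r_2)(\ell_2-\bar r_2)+t_0^2(\ell_2-t_2)(\ell_2-\bar t_2)=0.
\end{equation}
Combining~\eqref{eq:plan-key} with the dichotomy from Step~1 finishes the argument. If $r_2=t_2$, the identity collapses via $r_0^2+t_0^2=1$ to $(\ell_2-r_2)(\ell_2-\bar r_2)=0$, so $\ell_2\in\{r_2,\bar r_2\}$; the hypothesis $\Im\ell_2>0$ then excludes $r_2\in\RR$, producing alternative~(ii). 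If instead $r_2,t_2\in\RR$, writing $\ell_2=x+iy$ with $y>0$, the imaginary part of~\eqref{eq:plan-key} gives $x=r_0^2r_2+t_0^2t_2$, and substituting into the real part gives $y^2=r_0^2t_0^2(r_2-t_2)^2$; hence $r_2\neq t_2$, producing alternative~(i), while the subcase $r_2=t_2\in\RR$ is ruled out since it would give $y=0$.

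The main technical obstacle is setting up the complexification carefully and verifying that substituting the formal Weierstrass graphs $\om=\omega_a(\ka),\omega_b(\ka),\omega_\ell(\ka)$ into the holomorphic extensions of the unitarity identities is legitimate as an identity of convergent power series in $\ka$; once this bookkeeping is accepted, the coefficient matching and case analysis reduce to elementary algebra.
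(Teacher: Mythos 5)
Your proposal is correct, and it reaches the same algebraic endpoint as the paper by a noticeably cleaner route. The paper also works from the two unitarity identities \eqref{eqn:S_unitary1} and the factorizations \eqref{eqn:single_expansion}, but it extracts the needed constraints by writing out the coefficients of $\ka^3$, $\ka^2\om$, $\ka\om^2$ and $\om^3$ in $|\ell|^2-|a|^2-|b|^2$ and in $a\bar b$, reducing a small linear system in the auxiliary real quantities built from $r_\ka,t_\ka,r_\om,t_\om$ to obtain $\Re(\ell_2)=r_0^2\Re(r_2)+t_0^2\Re(t_2)$ and $r_2+\bar t_2\in\RR$, and then separately restricting to the real line $\om=-\ell_1\ka$ to read off $|\ell_2|^2=r_0^2|r_2|^2+t_0^2|t_2|^2$ and $r_2\bar t_2\in\RR$; the dichotomy and the identification of $\ell_2$ with $r_2$ or $\bar r_2$ then follow as in your case analysis. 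Your substitution of the complexified Weierstrass graphs $\om=\omega_a(\ka),\omega_b(\ka),\omega_\ell(\ka)$ kills one factor in each identity and lets you read off a single $\ka^4$ coefficient each time, bypassing the linear system entirely; your single quadratic relation $r_0^2(\ell_2-r_2)(\ell_2-\bar r_2)+t_0^2(\ell_2-t_2)(\ell_2-\bar t_2)=0$ packages the paper's two identities for $\Re(\ell_2)$ and $|\ell_2|^2$ at once (take its imaginary and real parts, using $\Im(\ell_2)\neq0$), and as a bonus yields the explicit formula $\Im(\ell_2)^2=r_0^2t_0^2(r_2-t_2)^2$ in case~(i). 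Two small points to make explicit in a write-up: (1) the claimed $O(\ka^2)$ leading orders of $\omega_a-\omega_b$, $\omega_a-\bar\omega_a$, and $\omega_\ell-\omega_a$ all rest on $\ell_1=r_1=t_1\in\RR$, so Lemma~\ref{lemma:single}(ii) must be invoked before Step~1; (2) the complexification is legitimate because a function holomorphic near $\kwz$ in $\CC^2$ that vanishes on the totally real slice $\RR^2$ vanishes identically, so the identities \eqref{eqn:S_unitary1}, rewritten with the conjugate-coefficient companions, hold as identities of holomorphic functions and may be composed with the (generally complex-valued) analytic curves $\omega_a,\omega_b,\omega_\ell$. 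With those two items stated, your argument is complete.
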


\subsection{Resonant transmission} \label{sec:generic} %%%%%%%%%%%%%%%%%%%%%%

We now present and prove the main theorem on total reflection and transmission by symmetric periodic slabs.  The content of the three parts of Theorem \ref{Thm:Main} can be paraphrased as follows.

\medskip
(i)
If the coefficients $r_2$ and $t_2$ of the quadratic part of the expansions (\ref{azeroset},\ref{bzeroset}) are distinct numbers (case (i) of Lemma~\ref{prop:singlesym_alt}), then all of the coefficients $r_n$ and $t_n$  of both expansions turn out to be real numbers.  The consequence of this is that 
 $a\kw$ and $b\kw$ vanish along real-analytic curves in $\cal D$ given by
\begin{eqnarray}
  && \omega=\omega_0-\ell_1(\kappa-\kappa_0)-r_2(\kappa-\kappa_0)^2-\cdots, \quad(a\kw=0)
  \label{azeroset} \\
  && \omega=\omega_0-\ell_1(\kappa-\kappa_0)-t_2(\kappa-\kappa_0)^2-\cdots. \quad(b\kw=0)
  \label{bzeroset}
\end{eqnarray}
Because $r_2\not=t_2$, these curves are distinct and intersect each other tangentially at $\kwz$ with one of them remaining above the other.  They give the loci of $100\%$ transmission ($|b/\ell|=1$ when $a=0$) and $0\%$ transmission ($|b/\ell|=0$).  Thus we are presenting a proof of {\em total transmission and reflection at two nearby frequencies near $\omega_0$}, whose difference tends to zero as $\kappa$ tends to $\kappa_0$.  This establishes rigorously the numerically observed transmission spikes in \cite{ShipmanVenakides2005,Shipman2010}.  The zero sets of $a$ and $b$ are depicted in Fig.~\ref{fig:Diamond}.  The scattering matrix $S\kw$ is not continuous at $\kwz$ because $|T|=|b/\ell|$ takes on all values between $0$ and $1$ in each neighborhood of $\kwz$.  However, as a function of $\omega$ at $\kappa=\kappa_0$, $S(\kappa_0,\omega)$ is in fact continuous.

\smallskip
(ii)
The result of part (i) of Theorem~\ref{Thm:Main} provides a local representation of the zero sets of $a$ and $b$ about $\kwz$ as the graphs of the real analytic functions (\ref{azeroset},\ref{bzeroset}) in $\cal D$.  These locally defined functions can be extended to real-analytic functions of $\kappa$ in an interval around $\kappa_0$ up until their graphs either intersect the boundary of the diamond $\cal D$ or attain an infinite slope.  It is not known which of these possibilities actually occur in theory.

\smallskip
(iii)
In case (ii) of Lemma~\ref{prop:singlesym_alt}, we prove that the transmittance $|b\kw/\ell\kw|^2$ is continuous at $\kwz$.  We are aware of no examples of this case but are not able to rule it out theoretically.  This result tells us that, if there is in fact a resonant transmission spike, then case (i) of the Lemma must hold and thus this spike attains the extreme values $0$ and $1$ by part (i) of Theorem~\ref{Thm:Main}.  In short, either $|b\kw/\ell\kw|^2$ is continuous at $\kwz$ or it attains $0$ and $1$ in every neighborhood of $\kwz$.

%$\renewcommand{\labelenumi}{\roman{enumi}.}

%%%%%%%%%%%%%%%%%%%%%%%%%%%%%%%%%%%%%%%%%%%%%%%%%%%
%%%%%  The MAIN RESULT
%%%%%%%%%%%%%%%%%%%%%%%%%%%%%%%%%%%%%%%%%%%%%%%%%%%

\begin{theorem}[Total transmission and reflection]\label{Thm:Main}
Given a two-dimensional lossless periodic slab satisfying \eqref{structure} that is symmetric about a line parallel to it.  Let $\kwz$ be a wavenumber-frequency pair in the region $\cal D$ of exactly one propagating harmonic at which the slab admits a guided mode, that is $\ell\kwz=0$.  Assume in addition the generic condition (\ref{generic}) and that $\Im(\ell_2)>0$ in the expansion of $\ell$ in \eqref{eqn:single_expansion}.  Then either the transmittance is continuous at $\kwz$ or it attains the magnitudes of \,$0$ and $1$ on two distinct real-analytic curves that intersect quadratically at $\kwz$.  Specifically,

\smallskip
(i)
If $r_2\not=t_2$, then $r_n$ and $t_n$ are real for all $n$ and both $a(\kappa_0,\omega)/\ell(\kappa_0,\omega)$ and \linebreak $b(\kappa_0,\omega)/\ell(\kappa_0,\omega)$ can be extended to continuous functions of $\omega$ in a real neighborhood of $\omega_0$ with values $r_0e^{i\gamma}$ and $it_0e^{i\gamma}$ at $\omega_0$.

(ii)
If $r_2\not=t_2$, let $c$ denote either $a$ or $b$ and let $\omega=f(\kappa)$ denote the corresponding function from the pair (\ref{azeroset},\ref{bzeroset}).  Then $f$ can be extended to a real analytic function $g(\kappa)$ on an interval $(\kappa_1,\kappa_2)$ containing $\kappa_0$ such that the graph of $g$ is in $\cal D$ and for each $i=1,2$, the limit $g(\kappa_i):=\lim_{\kappa\rightarrow\kappa_i} g(\kappa)$ exists and either $(\kappa_i,g(\kappa_i))$ is on the boundary of $\cal D$ or
$\frac{\partial c}{\partial\omega}(\kappa_i,g(\kappa_i))=0$.

(iii)
If $r_2=t_2$, then $|a/\ell|$ and $|b/\ell|$ can be extended to continuous functions in a real neighborhood of $\kwz$ with values $|r_0|$ and $|t_0|$ at $\kwz$.
\end{theorem}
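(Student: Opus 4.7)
The plan is to exploit the unitarity of the reduced scattering matrix on the real domain, promoted to a global identity among holomorphic functions via the companion $f^*(\kappa,\omega) := \overline{f(\bar\kappa,\bar\omega)}$, which is holomorphic in $(\kappa,\omega)$ whenever $f$ is. Since both sides of \eqref{eqn:S_unitary1} are then holomorphic functions that agree on the maximal totally real submanifold $\RR^2\subset\CC^2$, they agree throughout a complex neighborhood of $\kwz$, yielding the holomorphic identities
\begin{equation*}
\ell\ell^* = aa^* + bb^*, \qquad ab^* + a^*b = 0.
\end{equation*}

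For part (i), assume $r_2\ne t_2$, so both are real by Lemma~\ref{prop:singlesym_alt}. The first factors $A$ and $B$ of the expansions \eqref{eqn:single_expansion} are then distinct at second order in $\ka$, so the zero curves $Z_a$ and $Z_b$ intersect only at $\kwz$ in a small neighborhood. On $Z_a\setminus\{\kwz\}$ one has $a=0$ and $b\ne0$, so the identity $ab^*+a^*b=0$ forces $a^*=0$; thus $Z_a\subset Z_{a^*}$, and by symmetry $Z_a=Z_{a^*}$. The monic degree-one Weierstrass factors of $a$ and $a^*$, namely $A=\om+r_1\ka+r_2\ka^2+\cdots$ and $A^*=\om+\bar r_1\ka+\bar r_2\ka^2+\cdots$, have the same zero germ at $\kwz$, and the uniqueness clause of Weierstrass preparation forces $A=A^*$, hence $r_n\in\RR$ for all $n$. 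The analogous argument with $b$ in place of $a$ yields $t_n\in\RR$. At $\kappa=\kappa_0$ the first factors of $a$ and $\ell$ both reduce to $\om$ and cancel, leaving $a(\kappa_0,\omega)/\ell(\kappa_0,\omega)$ as the analytic ratio of the unit factors, with value $r_0 e^{i\gamma}$ at $\omega_0$; the argument for $b/\ell$ is identical.

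For part (ii), wherever $\partial_\omega c\ne 0$ the implicit function theorem extends the zero curve uniquely and analytically, so let $(\kappa_1,\kappa_2)$ be the maximal such open interval through $\kappa_0$ with graph in $\cal D$. Real analyticity of the extension is inherited from part (i). Because the extended graph is the connected branch through $\kwz$ of the one-dimensional analytic variety $\{c=0\}\cap\cal D$, compactness of $\overline{\cal D}$ together with the uniqueness of analytic continuation guarantees that the one-sided limits $g(\kappa_i)$ exist, and maximality forces each limit point to lie either on $\partial\cal D$ or at a zero of $\partial_\omega c$.

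For part (iii), assume $r_2=t_2$, so Lemma~\ref{prop:singlesym_alt}(ii) combined with $\Im(\ell_2)>0$ gives either $\ell_2=r_2$ or $\ell_2=\bar r_2$; in either case $\Re\ell_2=\Re r_2$ and $|\ell_2|=|r_2|$. Setting $s:=\om+\ell_1\ka$, which is real on the real domain, the Weierstrass factors take the form $A=s+r_2\ka^2+O(\ka^3)$ and $L=s+\ell_2\ka^2+O(\ka^3)$, and a direct expansion yields
\begin{equation*}
|A|^2-|L|^2 \;=\; 2s(\Re r_2-\Re\ell_2)\ka^2 + (|r_2|^2-|\ell_2|^2)\ka^4 + O(|s|\ka^3+\ka^5) \;=\; O(|s|\ka^3+\ka^5).
\end{equation*}
The quadratic form $(s+\Re(\ell_2)\ka^2)^2+\Im(\ell_2)^2\ka^4$ is positive-definite in $(s,\ka^2)$ since $\Im(\ell_2)>0$, giving the coercive bound $|L|^2\ge c(s^2+\ka^4)$; combined with the AM--GM inequality $s^2+\ka^4\ge 2|s|\ka^2$, this yields $|A|^2/|L|^2\to 1$. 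Multiplying by the continuous limit $|F_a|/|F_\ell|\to|r_0|$ gives $|a/\ell|\to|r_0|$, and the analogous computation gives $|b/\ell|\to|t_0|$.

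I expect the principal difficulty to be the reality argument in part (i): deducing $A=A^*$ from $Z_a=Z_{a^*}$ relies on the uniqueness clause of the Weierstrass preparation theorem and a careful comparison of germs at $\kwz$. Parts (ii) and (iii) are then largely mechanical, with (iii) reducing to the direct modulus computation above.
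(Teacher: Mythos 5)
Your parts (i) and (iii) are correct, and part (i) takes a genuinely different route from the paper. The paper proves reality of the $r_n$ by induction: assuming $r_1,\dots,r_n\in\RR$, it evaluates $a/b$ along the real curve $\om+\ell_1\ka+\cdots+r_n\ka^n=0$, invokes $a\bar b+\bar a b=0$ to force $a/b\in i\RR$, and lets $\ka\to0$ to conclude $r_{n+1}\in\RR$. Your argument instead promotes unitarity to the holomorphic identity $ab^*+a^*b=0$ (valid since both sides are holomorphic and agree on the totally real set ${\cal D}\subset\RR^2$), observes that $Z_a\cap Z_b=\{\kwz\}$ because $r_2\ne t_2$, deduces $Z_a=Z_{a^*}$, and concludes $r_n=\bar r_n$ from the uniqueness of the monic degree-one Weierstrass factor (equivalently, from $\sum r_n\ka^n=\sum\bar r_n\ka^n$ as an identity of the two graphs). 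This is cleaner, dispenses with the induction, and isolates where the structural $z$-symmetry enters (through $ab^*+a^*b=0$). Your part (iii) is the same computation as the paper's but more careful: the paper truncates the Weierstrass factors at order $\ka^2$ and asserts the ratio is unity, whereas your coercive bound $|L|^2\ge c(s^2+\ka^4)$ together with $|A|^2-|L|^2=O(|s||\ka|^3+|\ka|^5)$ actually justifies discarding the higher-order terms.

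Part (ii) has a genuine gap: the existence of $\lim_{\kappa\to\kappa_i}g(\kappa)$ is exactly the nontrivial point, and ``compactness of $\overline{\cal D}$ together with uniqueness of analytic continuation'' does not deliver it. Compactness only yields that the set of subsequential limits is a closed interval $[\omega_-,\omega_+]$; nothing you have written rules out $\omega_-<\omega_+$, i.e.\ that $g$ oscillates as $\kappa\nearrow\kappa_2$. The paper closes this by noting that in that case $a(\kappa_2,\omega)=0$ for all $\omega\in[\omega_-,\omega_+]$, and, for each $\omega\in(\omega_-,\omega_+)$, the oscillating graph crosses the horizontal line at height $\omega$ along a sequence $\kappa_j\nearrow\kappa_2$ with $a(\kappa_j,\omega)=0$; hence $\partial^n a/\partial\kappa^n(\kappa_2,\omega)=0$ for all $n$, then all mixed partials vanish on the segment, so $a\equiv0$ near it, contradicting $\partial a/\partial\omega\kwz\ne0$. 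You need this (or an equivalent) argument; without it the dichotomy at the endpoints is not established. A secondary, smaller point: the claim that the implicit-function-theorem continuation of the zero curve stays real-valued deserves a sentence (the complex IFT solution agrees with the real-valued $g$ on an overlap interval, hence has real Taylor coefficients there and remains real on the real axis), though the paper is equally terse on this.
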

\begin{proof}  %%%%%%%%%%%% PROOF %%%%%%%%%%%%%

(i) Assume $r_2\ne t_2\in\mathbb{R}$.
Recall from Lemma~\ref{lemma:single} that $r_1=t_1=\ell_1\in\RR$.
Assuming $r_1,\ldots,r_n\in\mathbb{R}$ for $n\geq2$, we will show that $r_{n+1}\in\mathbb{R}$.
For $(\ka,\om)$ subject to the relation $\om+\ell_1\ka+r_2\ka^2+r_3\ka^3+\ldots+r_n\ka^n=0$,
\begin{equation*}
\begin{split}
\frac{a}{b}&=\frac{r_{n+1}\ka^{n+1}+r_{n+2}\ka^{n+2}+\cdots}{(t_2-r_2)\ka^2+(t_3-r_3)\ka^3
   +\cdots+(t_n-r_n)\ka^n+t_{n+1}\ka^{n+1}+\cdots}\left[\frac{r_0e^{i\gamma}+O(|\ka|)}{i t_0e^{i\gamma}+O(|\ka|)}\right]\\
&=\ka^{(n-1)}\frac{r_{n+1}+r_{n+2}\ka+\cdots}{(t_2-r_2)+(t_3-r_3)\ka+\cdots}
\left[\frac{r_0e^{i\gamma}+O(|\ka|)}{i t_0e^{i\gamma}+O(|\ka|)}\right].
\end{split}
\end{equation*}
Because $a\bar b+\bar ab=0$ (see \eqref{eqn:S_unitary1}), it follows that $a/b\in i\mathbb{R}$, and thus
\begin{equation*}
\left[\frac{r_{n+1}+r_{n+2}\ka+\cdots}{(t_2-r_2)+(t_3-r_3)\ka+\cdots}\right]
  \left[\frac{r_0e^{i\gamma}+O(\ka)}{i t_0e^{i\gamma}+O(\ka)}\right]\in i\mathbb{R}.
\end{equation*}
Letting $\ka\rightarrow0$ yields
\begin{equation*}
\frac{r_{n+1}}{t_2-r_2}\cdot \frac{r_0}{it_0}\in i\mathbb{R},
\end{equation*}
which implies that $r_{n+1}\in\mathbb{R}$.  We conclude by induction that $r_n\in\mathbb{R}$ for all $n\geq1$.  The proof that $t_n\in\mathbb{R}$ for all $n$ is analogous.

To prove the second statement, one sets $\ka=\kappa-\kappa_0=0$ and observes that the ratios $a/\ell$ and $b/\ell$ have limiting values of $r_0e^{i\gamma}$ and $it_0e^{i\gamma}$, respectively, as $\om\to0$, or $\omega\to\omega_0$.

(ii)
Define the set
\begin{equation*}
  {\cal G} := \{ g:(\kappa_-,\kappa_+)\!\to\!\mathbb{R} \,|\, \kappa_0\in(\kappa_-,\kappa_+),\, g \text{ is real-analytic},\,
             a(\kappa,g(\kappa))=0,\, \Gamma(g)\in{\cal D}\},
\end{equation*}
in which $\Gamma(g)$ is the graph of $g$, and define the numbers
\begin{eqnarray*}
  && \kappa_1 := \inf\{ \kappa_- \,|\, g\in{\cal G}\}, \\
  && \kappa_2 := \sup\{ \kappa_+ \,|\, g\in{\cal G}\}.
\end{eqnarray*}
By virtue of the function \eqref{azeroset}, which belongs to $\cal G$, $\kappa_1<\kappa_0<\kappa_2$.  Standard arguments show that any two functions from $\cal G$ coincide on the intersection of their domains, and one obtains thereby a maximal extension $g\in{\cal G}$ of \eqref{azeroset} with domain $(\kappa_1,\kappa_2)$.  We now show that $\lim_{\kappa\nearrow\kappa_2}g(\kappa)$ exists.  Set
$\omega_-=\liminf_{\kappa\nearrow\kappa_2} g(\kappa)$ and $\omega_+=\limsup_{\kappa\nearrow\kappa_2}g(\kappa)$.  Because of the continuity of $g$, the segment $(\kappa_2,[\omega_-,\omega_+])$ in $\overline{\cal D}$ is in the closure of the graph of $g$, on which $a$ vanishes.  Thus $a(\kappa_2,\omega)=0\,\forall\,\omega\in[\omega_-,\omega_+]$.  Moreover, for each $\omega\in(\omega_-,\omega_+)$, there is a sequence of points $(\kappa_j,\omega)$ with $\kappa_j\nearrow\kappa_2$ and $a(\kappa_j,\omega)=0$ from which we infer that
$\partial^n a/\partial\kappa^n(\kappa_2,\omega)=0\,\forall\,n\in\mathbb{N}$ and hence that
$\partial^{m+n}a/\partial\omega^m\partial\kappa^n(\kappa_2,\omega)=0\;\forall\,m,n\in\mathbb{N}$ and $\forall\,\omega\in(\omega_-,\omega_+)$.  If $(\omega_-,\omega_+)$ is nonempty, then $a$ must vanish in $\cal D$, which is untenable in view of the assumption that
$\partial a/\partial\omega\kwz\not=0$.  This proves that $\omega_-\!=\!\omega_+$ so that $\omega_2:=\lim_{\kappa\nearrow\kappa_2}g(\kappa)$ exists.  If $|\kappa|/\sqrt{\ep_0\mu_0}<\omega_2<(1-|\kappa|)/\sqrt{\ep_0\mu_0}$ and $\partial a/\partial\omega(\kappa_2,\omega_2)\not=0$, the implicit function theorem provides an element of $\cal G$ with $\kappa_+>\kappa_2$, which is not compatible with the definition of $\kappa_2$.
Analogous arguments apply to the endpoint $\kappa_1$ and to the function $b$.

(iii) If $r_2=t_2$, then by Lemma \ref{prop:singlesym_alt}, $t_2=\ell_2$ or $t_2=\bar\ell_2$.  Keeping in mind that $\Im(\ell_2)>0$ and $\ell_1\in\mathbb{R}$ and restricting to $(\ka,\om)\in\mathbb{R}^2$,
\[
\begin{split}
\lim_{\ka,\om\to0} \left|\frac{b}{\ell}\right|
 =&\lim_{\ka,\om\to0}
 \left|\frac{\om+\ell_1\ka+t_2\ka^2+t_3\ka^3+\cdots}
 {\om+\ell_1\ka+\ell_2\ka^2+\ell_3\ka^3+\cdots}\right|
 \left|\frac{it_0e^{i\gamma}+O(|\om|+|\ka|)}{1+O(|\om|+|\ka|)}\right| \\
 =&\lim_{\ka,\om\to0}|t_0|
 \left|\frac{\om+\ell_1\ka+t_2\ka^2}
 {\om+\ell_1\ka+\ell_2\ka^2}\right|.
\end{split}
\]
Whether $t_2$ is equal to $\ell_2$ or $\bar\ell_2$, the second factor of the last expression is equal to unity, and we obtain $\lim_{\ka,\om\to0} |b/\ell| = |t_0|$.  Similarly, one shows that $\lim_{\ka,\om\to0} |a/\ell| = |r_0|$.
\end{proof}

%%%%%%%%%%%%%%%%%%%%%%%%%%%%%%%%%%%%%%%%%%%%%%%%%%%
%%%%   END OF PROOF of the Main theorem
%%%%%%%%%%%%%%%%%%%%%%%%%%%%%%%%%%%%%%%%%%%%%%%%%%%

\subsection{Discussion of assumptions and conditions}  %%%%%%%%%

A number of assumptions and conditions concerning the slab structure, the nature of the dispersion relation $\ell\kw=0$, and the coefficients in the Weierstra{\ss} representations of $\ell\kw$, $a\kw$, and $b\kw$ have played roles in our analysis.  We take a moment to discuss their relevance and the regimes of their validity.

\smallskip
{\itshape Existence of embedded guided-mode frequencies.}\, Real $\kw$ values at which a periodic slab supports a guided mode are typically located below the light cone $\omega^2\epsilon_0\mu_0=\kappa^2$ for the ambient medium in the first Brillouin zone, that is below the diamond $\cal D$.  The reason is that the light cone marks the bottom of the continuous spectrum for the $\kappa$-periodic Helmholtz operator in the strip $\{(x,z):|x|<\pi\}$ \cite{Bonnet-BeStarling1994} and thus any guided mode frequency above the light cone corresponds to an embedded eigenvalue and is therefore unstable with respect to generic perturbations of the operator.  Embedded guided-mode frequencies arise under special conditions; in particular, structures symmetric in $x$ for which $\epsilon$ and $\mu$ are sufficiently large admit a positive finite number of such frequencies in $\cal D$ at $\kappa=0$ \cite{Bonnet-BeStarling1994}, which are antisymmetric in $x$.

{\itshape Symmetry.}\, The assumption of symmetry of the structure about a line parallel to it is crucial to the proof of total reflection and transmission.  The condition is used in part (i) of the proof of Theorem~\ref{Thm:Main} when the fact that $a/b$ is imaginary is invoked.  This is a technical step in the proof and does not provide much of an enlightening explanation of the role of symmetry.  At least one can say that the functions (\ref{azeroset},\ref{bzeroset}), which are in general complex-valued, become real-valued when the structure is subject to a certain condition of symmetry.

{\itshape Simplicity of the eigenvalue $\ell$.}\, The frequencies of these antisymmetric modes are eigenvalues of a modified scattering problem in which the one imaginary multiplier $-i\eta_0=-i|\eta_0|$ in the definition of the Dirichlet-to-Neumann operator $\DtN$ \eqref{DtN} is replaced by zero.
This results in a symmetric modification $h_r(u,v)$ of the non-symmetric form $h(u,v)$ defined in~\eqref{a}. For $\kappa=0$, the associated operator is invariant on the subspace of $\Honeper$ consisting of antisymmetric functions in~$x$.  Thus it can be viewed as a Dirichlet operator in the strip $\{(x,z):0<x<\pi\}$, and the smallest eigenvalue is therefore simple.

{\itshape The generic condition \eqref{generic}} prevails in numerical computations.  The nonvanishing of $\partial\ell/\partial\omega$ actually implies the nonvanishing of either $\partial a/\partial\omega$ or $\partial b/\partial\omega$.  If one of the latter vanishes, the limiting value of the transmitted energy at $\kwz$ is either $1$ or $0$; we will consider this case in the next section.  Verification of any of these conditions for a particular structure is done numerically (or by experiment).

{\itshape Nonrobustness.}\, Guided modes at embedded frequencies tend to be nonrobust under perturbations of $\kappa$, which means that the point $\kwz$ is isolated in the domain $\cal D$.  Typically, one computes that $\Im(\ell_2)$ is strictly positive in \eqref{eqn:single_expansion}, which guarantees that $\kwz$ is an isolated point in the intersection of $\cal D$ with the dispersion curve $\ell\kw=0$.  So far, we do not have an analytic proof of nonrobustness for any particular structure.  In fact, one can show that real dispersion relations for robust guided modes do exist within the continuous spectrum if the structure possesses a smaller period than that of the pseudo-period of the guided modes.

{\itshape Traveling modes and $\ell_1\not=0$.}\,  A guided mode at $\kappa=0$ is in fact a standing wave, and because of the symmetry of $\ell\kw$ in $\kappa$, necessarily $\ell_1=0$.  In this case, the anomaly is not linearly detuned as $\kappa$ is perturbed from $0$ but simply widens at most quadratically in $\kappa$.  As far as we know, guided modes that are not robust under a perturbation of $\kappa$ have not been demonstrated for $\kappa_0\not=0$ or for structures that are not symmetric in $x$, although we believe that they should exist in slabs with anisotropic components.  Guided modes with $\kappa_0\not=0$ have been proved in a discrete system in which the ambient space is a two-dimensional lattice and the waveguide is a one-dimensional periodic lattice coupled along a line to the 2D one \cite{PtitsynaShipman2011} and for asymmetric scatterers in another discrete model \cite{ShipmanRibbeckSmith2010}.  In fact, both of these systems are amenable to the analysis in this section and are subject to Theorem \ref{Thm:Main} with $\ell_1$ not necessarily zero.  Compare Figures \ref{fig:SingleRealTwo} and \ref{fig:SingleRealOne}.  

{\itshape The alternative $r_2=t_2\notin\mathbb{R}$} of Lemma \ref{prop:singlesym_alt}.
Under this condition, Theorem \ref{Thm:Main} states that the transmittance exhibits no spike; in fact, $b/\ell$ is continuous at $\kwz$.  Although we are not able to rule out this alternative, we are aware of no examples in which it does occur numerically.  In any case, Theorem \ref{Thm:Main} guarantees that, if the transmittance is discontinuous at $\kwz$, it must attain the values of 0 and 1 along the real-analytic curves (\ref{azeroset},\ref{bzeroset}).

%%%%%%%%%%%%%%%%%%%%%%%%%%%%%%%%%%%%%%%%%%%%%%%%%%%
%%%%%%%%%%%%%%%%%%%%%%%%%%%%%%%%%%%%%%%%%%%%%%%%%%%
%%%%%%%%%%%%%%%%%%%%%%%%%%%%%%%%%%%%%%%%%%%%%%%%%%%
%%%%  SECTION
%%%%%%%%%%%%%%%%%%%%%%%%%%%%%%%%%%%%%%%%%%%%%%%%%%%
%%%%%%%%%%%%%%%%%%%%%%%%%%%%%%%%%%%%%%%%%%%%%%%%%%%
%%%%%%%%%%%%%%%%%%%%%%%%%%%%%%%%%%%%%%%%%%%%%%%%%%%

\section{Nongeneric resonant transmission}\label{Section:Multiple}

The anomalies we have analyzed so far possess ``background" reflection and transmission values of $|r_0|$ and $|t_0|$, both of which lie strictly between $0$ and $1$.  These are the limiting values of the ratios of reflected or transmitted energy to the incident energy at $\kappa=\kappa_0$ and $\omega\to\omega_0$ (see part (i) of Theorem \ref{Thm:Main}).  When one of these values reaches $0$ or $1$ (we always have $|r_0|^2 + |t_0|^2=1$), the form for $a$ or $b$ in  \eqref{eqn:single_expansion} must be modified so that the order of $\om$ in its first factor exceeds one.
This results in a transmission anomaly possessing a single sharp dip or peak near the frequency of the guided mode---the peak-dip feature reduces to an upright or inverted Lorentzian shape (Fig.~\ref{fig:MixRealOne1}, \ref{fig:MixRealThree1}).

We also discuss what happens when $\ell$ is degenerate at $\kwz$.  We still assume that $\ell$ is a simple eigenvalue of the operator $A\kw$ but allow the first partial derivative
of $\ell$ with respect to $\omega$ to vanish at $\kwz$.  Assuming that the second partial is nonzero, two double-spiked anomalies instead of one emanate from the guided mode frequency upon perturbation of~$\kappa$ (Fig.~\ref{fig:SecondOrderRealOne1}, \ref{fig:SecondOrderRealTwo1}).  We do not have a rigorous demonstration of a case of degeneracy, but we have observed numerically a pair of anomalies emerging from a single guided-mode frequency for a slab consisting of two infinite rows of identical dielectric rods side by side.

We shall omit the proofs, which are technical.  Resonant transmission in nongeneric cases is worthy of an in-depth investigation in its own right, and our aim here is to provide a view of some of the phenomena that can theoretically occur.

%%%%%%%%%%%%%%%%%%%%%%%%%%%%%%%%%%%%%%%%%%%%%%%
\subsection{Total background reflection or transmission}\label{sec:totalbackground}

Let us assume that $\partial\ell/\partial\omega\not=0$ at $\kwz$; then only one of the functions $a$ and $b$ can be degenerate at $\kwz$.

\begin{proposition}
Suppose that at $\kwz\in{\cal D}$,
\begin{equation*}
  \ell\kwz=0,\;\text{ and }\;\; \frac{\partial\ell}{\partial\omega}\kwz\not=0.
\end{equation*}
Then
\begin{equation*}
  \frac{\partial a}{\partial\omega}\kwz\not=0 \;\,\text{ or }\;\;
  \frac{\partial b}{\partial\omega}\kwz\not=0.
\end{equation*}
\end{proposition}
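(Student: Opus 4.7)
The plan is to proceed by contradiction, using the unitarity identity $|\ell|^2 = |a|^2 + |b|^2$ from \eqref{eqn:S_unitary1}, which holds along real $\kw\in\cal D$. The idea is simply to read off the vanishing order of both sides along the real line $\kappa=\kappa_0$ and match leading coefficients.

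First I would record the base-point data. Since $\ell\kwz=0$, the identity $|\ell|^2=|a|^2+|b|^2$ at $\kwz$ forces $a\kwz=b\kwz=0$. Thus along the real slice $\kappa=\kappa_0$, each of the three analytic functions vanishes at $\omega_0$, and writing $\om=\omega-\omega_0\in\RR$ we have the Taylor expansions
\begin{equation*}
  \ell(\kappa_0,\omega)=\ell_\omega\,\om+O(\om^2),\qquad
  a(\kappa_0,\omega)=a_\omega\,\om+O(\om^2),\qquad
  b(\kappa_0,\omega)=b_\omega\,\om+O(\om^2),
\end{equation*}
where $\ell_\omega,a_\omega,b_\omega$ denote the corresponding partial derivatives with respect to $\omega$ at $\kwz$. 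By hypothesis, $\ell_\omega\neq0$.

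Next I would suppose, for contradiction, that both $a_\omega=0$ and $b_\omega=0$. Then $a(\kappa_0,\omega)$ and $b(\kappa_0,\omega)$ are $O(\om^2)$, so $|a(\kappa_0,\omega)|^2+|b(\kappa_0,\omega)|^2=O(\om^4)$ as $\om\to0$ through real values. On the other hand, $|\ell(\kappa_0,\omega)|^2=|\ell_\omega|^2\om^2+O(\om^3)$ with nonzero leading coefficient. Substituting into the real-line identity $|\ell|^2=|a|^2+|b|^2$ yields
\begin{equation*}
  |\ell_\omega|^2\,\om^2+O(\om^3)=O(\om^4),
\end{equation*}
which is impossible for small nonzero real $\om$. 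Hence at least one of $a_\omega,b_\omega$ is nonzero, which is the desired conclusion.

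There is no serious obstacle here; the only thing to be careful about is that the unitarity identity is an identity of real-analytic functions on the real domain $\cal D$, so taking the limit along $\kappa=\kappa_0$ with $\om\in\RR$ is legitimate, and all three functions have well-defined Taylor expansions at $\kwz$ by Lemma~\ref{lemma:analyticityC1} and the analyticity lemma for $a$ and $b$. The argument did not use the symmetry assumption, only the energy balance \eqref{eqn:S_unitary1}, so the proposition actually holds for the (a priori more general) situation in which $\ell$ is a simple eigenvalue with $\partial\ell/\partial\omega\neq0$ at an isolated guided-mode pair.
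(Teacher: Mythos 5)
Your proof is correct and follows essentially the same route as the paper: both arguments read off the vanishing order of each side of the energy identity $|\ell|^2=|a|^2+|b|^2$ along the real slice $\kappa=\kappa_0$ and observe that a quadratic leading term on the left cannot be balanced by quartic-or-higher terms on the right. The only cosmetic difference is that you phrase it as a contradiction with Taylor expansions while the paper phrases it via factorizations $(\omega-\omega_0)^m g(\omega)$; the substance is identical.
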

\begin{proof}
  By conservation of energy $|\ell|^2 = |a|^2+|b|^2$ for $\kw\in{\cal D}$, we have $a\kwz=0$ and $b\kwz=0$ and therefore the representations $\ell(\kappa_0,\omega) = (\omega-\omega_0) g_1(\omega)$,
$a(0,\omega) = (\omega-\omega_0)^m g_2(\omega)$, and
$b(0,\omega) = (\omega-\omega_0)^n g_3(\omega)$, with $g_i$ analytic and nonzero at $\omega_0$.  This is consistent with $|\ell|^2 = |a|^2+|b|^2$ only if $m=1$, that is ${\partial a}/{\partial\omega}\not=0$, or $n=1$, that is ${\partial b}/{\partial\omega}\not=0$, at $\kwz$.
\end{proof}

Let us consider the case of 100\% background transmission, that is ${\partial\ell}/{\partial\omega}$,
${\partial^2 a}/{\partial\omega^2}$, and
${\partial b}/{\partial\omega}$ are nonzero and that
${\partial a}/{\partial\omega}=0$ at $(\kappa_0,\omega_0)$.
The Weierstra{\ss} Preparation Theorem yields the following factorizations:
\begin{eqnarray*}\label{formula:mixed_basic}
&& \ell\kw=\left(\om+\ell_1\ka+\ell_2\ka^2 +\cdots\right)(1+O(|\ka|+|\om|)), \\
&& a\kw=\left(\om^2+\om\alpha^1(\ka)+\alpha^0(\ka)\right)(r_0e^{i\alpha}+O(|\ka|+|\om|)), \\
&& b\kw=\left(\om+t_1\ka+t_2\ka^2+\cdots\right)(t_0e^{i\beta}+O(|\ka|+|\om|)),
\end{eqnarray*}
where $r_0,t_0>0$.

Let us also suppose that $\om^2+\om\alpha^1(\ka)+\alpha^0(\ka)$ has distinct roots at $\ka=0$, so that it can be factored analytically,
\[
a=\left(\om+r_1^{(1)}\ka+r_2^{(1)}\ka^2+\cdots\right)
  \left(\om+r_1^{(2)}\ka+r_2^{(2)}\ka^2+\cdots\right)
  \left(r_0e^{i\alpha}+O(|\ka|+|\om|)\right).
\]
Assuming that the coefficients $r_n^{(i)}$ are real, maximal transmission is achieved now at {\em two} points that move apart linearly in $\ka=\kappa-\kappa_0$.

\begin{lemma} \label{Prop:MixedBacic}
The coefficients of the expansions satisfy the following properties. \\
(i) $t_0=1$, $t_1=\ell_1\in\mathbb{R}$, $\Im(\ell_2)\ge0$; \\
(ii) $e^{i\beta}= \pm ie^{i\alpha}$;\\ 
(iii) $r_1^{(1)}+r_1^{(2)}$,  $r_1^{(1)}r_1^{(2)}$,
and $(r_1^{(1)}-\ell_1)(r_1^{(2)}-\ell_1)$ are real-valued.
Therefore, either $r_1^{(1)},r_1^{(2)}$ are both real or they are conjugate complex numbers.

\smallskip
One of the following alternatives holds. \\ 
(i) If $(r_1^{(1)}-\ell_1)(r_1^{(2)}-\ell_1)\ne0$, then $t_2=\Re(\ell_2)$ and
$|\Im(\ell_2)|^2=r_0^2\left|r_1^{(1)}-\ell_1\right|^2 \cdot\left|r_1^{(2)}-\ell_1\right|^2\ne0$,\\
(ii) If $(r_1^{(1)}-\ell_1)(r_1^{(2)}-\ell_1)=0$, then
$\Re(\ell_2)=\Re(t_2)$ and
$|\Im(\ell_2)|=|\Im(t_2)|$.
\end{lemma}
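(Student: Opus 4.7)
The plan is to derive every claim from two algebraic identities enforced by the unitarity of the scattering matrix $S\kw$ on real $(\kappa,\omega)$:
\begin{equation*}
|a|^2 + |b|^2 = |\ell|^2 \qquad\text{and}\qquad a\bar b + \bar a b = 0,
\end{equation*}
which continue analytically to $aa^* + bb^* = \ell\ell^*$ and $ab^* + a^*b = 0$ on $\mathbb{C}^2$, where $f^*(\ka,\om) := \overline{f(\bar\ka,\bar\om)}$. Writing each of $a,b,\ell$ as a product $P_f U_f$ of its Weierstrass polynomial and unit, the combinations $P_f P_f^*$ and $V_f := U_f U_f^*$ have \emph{real} Taylor coefficients, so both identities become relations among real-coefficient analytic series in $(\ka,\om)$. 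I expand them order by order and, at the end, specialize to the real line $\om = -\ell_1\ka$ to extract the two alternatives.

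Setting $\ka = 0$ in $aa^* + bb^* = \ell\ell^*$ and balancing the leading $\om^2$ gives $t_0^2 = 1$, hence $t_0 = 1$. Setting $\ka = 0$ in $ab^* + a^*b = 0$ and dividing out the common factor $\om^3$ yields $\cos(\alpha-\beta) = 0$, so $e^{i\beta} = \pm i e^{i\alpha}$. Since $P_a P_a^*$ begins at total order $4$, the total-order-$2$ part of $\ell\ell^* - bb^*$ must vanish; matching the $\ka\om$ and $\ka^2$ coefficients of $P_\ell P_\ell^*$ and $P_b P_b^*$ yields $\Re(t_1) = \Re(\ell_1)$ and $|t_1| = |\ell_1|$. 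Theorem~\ref{theorem:GenGM}, applied to the dispersion branch $\om = -\ell_1\ka - \ell_2\ka^2 - \cdots$, then forces $\Im(\ell_1) = 0$ and $\Im(\ell_2) \geq 0$, so $\ell_1 \in \mathbb{R}$ and $t_1 = \ell_1$.

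Writing $s := r_1^{(1)} + r_1^{(2)}$ and $p := r_1^{(1)} r_1^{(2)}$, I match the $\ka\om^2$ and $\ka^2\om$ coefficients of $ab^* + a^*b = 0$ using $t_1\in\mathbb{R}$ and $e^{2i(\alpha-\beta)} = -1$; the two equations reduce to $\Im(s) = 0$ and $t_1\Im(s) + \Im(p) = 0$, so $s, p\in\mathbb{R}$ and $(r_1^{(1)}-\ell_1)(r_1^{(2)}-\ell_1) = p - \ell_1 s + \ell_1^2\in\mathbb{R}$; the $r_1^{(i)}$ are the roots of the real polynomial $x^2 - sx + p$ and are therefore either both real or a conjugate pair. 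At total order $3$ in $\ell\ell^* - bb^* = aa^*$ (where $aa^*$ still vanishes), the $\om^3$ and $\ka\om^2$ coefficients first deliver $v_\ell^{01} = v_b^{01}$ and $v_\ell^{10} = v_b^{10}$ among the Taylor coefficients of $V_\ell, V_b$; the $\ka^2\om$ coefficient then reduces to $\Re(t_2) = \Re(\ell_2)$.

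For the alternatives I restrict to the real line $\om = -\ell_1\ka$, on which the Weierstrass factorizations give
\begin{equation*}
\ell = \ell_2\ka^2 u_\ell + O(\ka^3), \quad b = t_2 e^{i\beta}\ka^2 u_b + O(\ka^3), \quad a = X r_0 e^{i\alpha}\ka^2 u_a + O(\ka^3),
\end{equation*}
with $X := (r_1^{(1)}-\ell_1)(r_1^{(2)}-\ell_1)$ and $u_\ell, u_b, u_a$ units. If $X \neq 0$, the $\ka^4$-coefficient of $\Re(a\bar b) = 0$ forces $\pm X r_0\Im(t_2) = 0$, hence $t_2\in\mathbb{R}$ and $t_2 = \Re(\ell_2)$; the $\ka^4$-coefficient of $|\ell|^2 = |a|^2 + |b|^2$ then reads $|\ell_2|^2 = r_0^2 X^2 + |t_2|^2$, i.e.\ $\Im(\ell_2)^2 = r_0^2|r_1^{(1)}-\ell_1|^2|r_1^{(2)}-\ell_1|^2 \neq 0$. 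If $X = 0$, say $r_1^{(1)} = \ell_1$, then the factor $(\om + r_1^{(1)}\ka + r_2^{(1)}\ka^2 + \cdots)$ restricted to $\om = -\ell_1\ka$ becomes $r_2^{(1)}\ka^2 + O(\ka^3)$, so $a = O(\ka^3)$ on this line and $|a|^2 = O(\ka^6)$; the $\ka^4$-match of $|\ell|^2 = |b|^2 + O(\ka^6)$ gives $|\ell_2|^2 = |t_2|^2$, which together with $\Re(\ell_2) = \Re(t_2)$ yields $|\Im(\ell_2)| = |\Im(t_2)|$. The principal difficulty throughout is the order-by-order bookkeeping: the real-coefficient unit factors $V_f$ contribute to every order, and one must repeatedly use the lower-order constraints they satisfy to clear their contributions before reading off each claimed identity among the $P_f$-coefficients.
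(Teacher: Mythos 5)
Your argument is correct. The paper explicitly omits the proof of this lemma as ``technical,'' but its Appendix proof of the analogous Lemma~\ref{prop:singlesym_alt} proceeds by exactly the method you use — order-by-order expansion of $|\ell|^2-|a|^2-|b|^2=0$ and of the purely imaginary quantity $a\bar b$, reality of $\ell_1$ and $\Im(\ell_2)\ge 0$ via the guided-mode dispersion branch, elimination of the unit-factor coefficients through the lower-order relations, and restriction to the curve $\om+\ell_1\ka=0$ to read off the $\ka^4$ coefficients — so your proposal is essentially the authors' approach adapted to the case where $a$ degenerates to second order in $\om$.
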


\smallskip
\noindent
We obtain the factorizations
\begin{equation}\label{eqn:mixed_expansion}
\begin{aligned}
\ell&=\left(\om+\ell_1\ka+\ell_2\ka^2
     +\cdots\right)(1+O(|\ka|+|\om|)),\\
a&=\left(\om+r_1^{(1)}\ka+r_2^{(1)}\ka^2+\cdots\right)
  \left(\om+r_1^{(2)}\ka+r_2^{(2)}\ka^2+\cdots\right)
  \left(r_0e^{i\gamma}+O(|\ka|+|\om|)\right),\\
b&=\left(\om+\ell_1\ka+t_2\ka^2+\cdots\right)
  \left(\pm ie^{i\gamma}+O(|\ka|+|\om|)\right).
\end{aligned}
\end{equation}

If all the coefficients $t_n$ are real-valued, which is guaranteed by the first alternative (see Theorem \ref{theorem:MixDipPeak}), then the transmission vanishes along the real-analytic curve
\begin{equation}\label{eqn:mixcurve1}
\omega=\omega_0-\ell_1(\kappa-\kappa_0)-t_2(\kappa-\kappa_0)^2-\cdots.  \quad(b=0)
\end{equation}
If $r_n^{(1)}$ and $r_n^{(2)}$ are real-valued then the transmission achieves $100\%$ along two real-analytic curves
\begin{equation}\label{eqn:mixcurve2}
\omega=\omega_0-r_1^{(i)}(\kappa-\kappa_0)-r_2^{(i)}(\kappa-\kappa_0)^2-\dots,  \; i=1, 2.  \quad(a=0)
\end{equation}
The two frequencies of total transmission and the one frequency of total reflection come together at $\omega_0$ as $\kappa$ tends to $\kappa_0$.  The frequencies of maximal transmission are difficult to detect in Fig.~\ref{fig:MixRealOne1} and \ref{fig:MixRealThree1} because the transmission function is nearly constant near these points.  A magnified view is shown on the right of Fig.~\ref{fig:MixRealThree1}.
The proof of the following theorem is similar to that of Theorem~\ref{Thm:Main}.  We believe that $t_1$ must lie between $r_1^{(1)}$ and $r_1^{(2)}$ but do not have a proof of this.

\begin{theorem}\label{theorem:MixDipPeak}

Suppose that $\epsilon$ and $\mu$ are symmetric in $z$ and that $\ell\kwz=0$ at $\kwz\in{\cal D}$.  Let $\frac{\partial\ell}{\partial\omega}$, $\frac{\partial^2 a}{\partial\omega^2}$, and $\frac{\partial b}{\partial\omega}$ be nonzero and
$\frac{\partial a}{\partial\omega}=0$ at $(\kappa_0,\omega_0)$.  Suppose in addition that $(r_1^{(1)}-\ell_1)(r_1^{(2)}-\ell_1)\ne0$.  Then

(i) $t_n$ is real for all $n$ and therefore the coefficient $b$ of the transmitted field vanishes on the real-analytic curve in a neighborhood of $(\kappa_0,\omega_0)$ given by (\ref{eqn:mixcurve1});

(ii) if $r_1^{(1)}$ and $r_1^{(2)}$ are distinct real numbers, then
$r_n^{(1)}$ and $r_n^{(2)} $ are real for all $n$ and therefore the coefficient $a$ of the reflected field vanishes on the real analytic curves given by (\ref{eqn:mixcurve2}).
\end{theorem}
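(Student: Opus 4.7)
The plan is to mirror the inductive argument used in part (i) of Theorem \ref{Thm:Main}, which derives reality of Taylor coefficients from the unitarity identity $a\bar b + \bar a b = 0$ in \eqref{eqn:S_unitary1}, equivalently $b/a \in i\mathbb{R}$ (where defined) for real $\kw \in \cal D$. The hypotheses of the present theorem, combined with Lemma \ref{Prop:MixedBacic}, provide all the reality needed to start such an induction: one has $t_1 = \ell_1 \in \mathbb{R}$, the quantity $(r_1^{(1)}-\ell_1)(r_1^{(2)}-\ell_1)$ is real and nonzero, and in part (ii) the individual coefficients $r_1^{(1)}, r_1^{(2)}$ are assumed real and distinct.

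For part (i), I would assume inductively that $t_1,t_2,\dots,t_n \in \mathbb{R}$ for some $n \geq 1$ and show $t_{n+1} \in \mathbb{R}$. Restrict attention to real pairs $(\ka,\om)$ lying on the real-analytic curve
\[
\om + \ell_1\ka + t_2\ka^2 + \cdots + t_n\ka^n = 0.
\]
Along this curve, the first factor of $b$ in \eqref{eqn:mixed_expansion} reduces to $t_{n+1}\ka^{n+1} + O(\ka^{n+2})$, while each linear factor of $a$ reduces to $(r_1^{(i)}-\ell_1)\ka + O(\ka^2)$ for $i=1,2$. Taking the quotient,
\[
\frac{b}{a} \;=\; \frac{\pm i\, t_{n+1}}{(r_1^{(1)}-\ell_1)(r_1^{(2)}-\ell_1)\,r_0}\;\ka^{n-1} \;+\; O(\ka^n).
\]
The denominator of the leading coefficient is real and nonzero by Lemma \ref{Prop:MixedBacic} and the standing hypothesis, so the requirement $b/a \in i\mathbb{R}$ forces $t_{n+1} \in \mathbb{R}$. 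The real analytic curve \eqref{eqn:mixcurve1} on which $b$ vanishes then comes directly from the Weierstra{\ss} factorization, since all coefficients of the first factor of $b$ are real.

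For part (ii), I would run the analogous induction on the coefficients of each linear factor of $a$ separately. Assuming $r_1^{(1)},\dots,r_n^{(1)} \in \mathbb{R}$, I restrict to the real curve $\om = -\sum_{k=1}^n r_k^{(1)}\ka^k$. On this curve the first factor of $a$ becomes $r_{n+1}^{(1)}\ka^{n+1} + O(\ka^{n+2})$; the second factor of $a$ becomes $(r_1^{(2)}-r_1^{(1)})\ka + O(\ka^2)$, nonzero at leading order by distinctness; and the first factor of $b$ becomes $(\ell_1-r_1^{(1)})\ka + O(\ka^2)$, nonzero at leading order by the hypothesis $(r_1^{(1)}-\ell_1)(r_1^{(2)}-\ell_1)\ne 0$. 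Computing $b/a$ in the same fashion, the leading behavior is a real nonzero multiple of $1/r_{n+1}^{(1)}$ times a power of $\ka$ times a pure imaginary constant, and unitarity pins down $r_{n+1}^{(1)} \in \mathbb{R}$. Swapping the roles of the two linear factors of $a$ in the restriction yields $r_{n+1}^{(2)} \in \mathbb{R}$ by the same calculation. With all coefficients real, the factorizations \eqref{eqn:mixed_expansion} immediately expose \eqref{eqn:mixcurve2} as the real-analytic zero set of $a$.

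The main obstacle is bookkeeping rather than analysis: at each step one must verify that the leading-order behavior of the ratio $b/a$ along the restricted curve is a product of factors whose reality and nonvanishing are already known, so that the pure-imaginary condition imposed by unitarity truly determines the reality of the next coefficient. The degenerate-$a$ structure (two linear factors instead of one) doubles the order in $\ka$ in the denominator of $b/a$, which is what demands the auxiliary nondegeneracy hypothesis $(r_1^{(1)}-\ell_1)(r_1^{(2)}-\ell_1)\ne 0$; aside from this, no tools beyond those of Theorem \ref{Thm:Main}(i) are required.
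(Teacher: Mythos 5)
The paper omits the proof of this theorem (Section~\ref{Section:Multiple} states only that it is ``similar to that of Theorem~\ref{Thm:Main}''), and your proposal is a correct reconstruction along exactly those lines: the same unitarity-plus-induction argument of Theorem~\ref{Thm:Main}(i), run along the real curves determined by the already-real coefficients, with the hypothesis $(r_1^{(1)}-\ell_1)(r_1^{(2)}-\ell_1)\ne0$ together with Lemma~\ref{Prop:MixedBacic} supplying the real, nonzero leading factors that make the pure-imaginarity condition pin down the next coefficient. The only cosmetic repair needed is in part (ii), where the unknown coefficient $r_{n+1}^{(1)}$ should be placed in the numerator (use $a/b\in i\RR$ rather than $b/a$, as the paper does in the proof of Theorem~\ref{Thm:Main}) so that the argument does not implicitly assume $r_{n+1}^{(1)}\ne0$; if that coefficient vanishes it is already real and the induction proceeds.
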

%

%%%%%%%%%%%%%%%%%%%%%%%%%%%%%%%%%%%%%%%%%%%%%%%
%%%%%%%%%%%%%%%%%%%%%%%%%%%%%%%%%%%%%%%%%%%%%%%

\subsection{Multiple anomalies}\label{sec:multiple}

If both $\ell$ and $\partial\ell/\partial\omega$ vanish at $\kwz\in{\cal D}$, then again by virtue of the equation $|\ell|^2=|a|^2+|b|^2$ for real $\kw$, both $a$ and $b$ and their $\omega$-derivatives vanish at $\kwz$.  We treat the case in which
\begin{equation}\label{degenerate}
\begin{gathered}
\frac{\partial \ell}{\partial\omega}=0,\;
\frac{\partial a}{\partial\omega}=0,\;
\frac{\partial b}{\partial\omega}=0\,;\\
\frac{\partial^2\ell}{\partial\omega^2}\neq0,\;
\frac{\partial^2 a}{\partial\omega^2}\neq0,\;
\frac{\partial^2 b}{\partial\omega^2}\neq0\,.
\end{gathered}
\end{equation}
The zero loci are given locally by the roots of a quadratic function in $\om$ with coefficients that are analytic in $\ka$ and vanish at $\ka=0$,
\begin{eqnarray*}
&& \ell\kw=\left(\om^2+\om\lambda^1(\ka)+\lambda^0(\ka)\right)(1+O(|\ka|+|\om|)),\\
&& a\kw=\left(\om^2+\om\alpha^1(\ka)+\alpha^0(\ka)\right)(r_0e^{i\gamma}+O(|\ka|+|\om|)),\\
&& b\kw=\left(\om^2+\om\beta^1(\ka)+\beta^0(\ka)\right)(it_0e^{i\gamma}+O(|\ka|+|\om|)),
\end{eqnarray*}
in which $\lambda^i(0)=\alpha^i(0)=\beta^i(0)=0$, $0<r_0<1$, and $t_0$ is real with $0<|t_0|<1$.  The ratio of the leading coefficients in the factors of $a$ and $b$ is imaginary because of the special form of the scattering matrix (\ref{ScatteringMatrix}) coming from the symmetry of the structure in $z$.
Let us assume again that the left-hand factors have distinct roots so that they can be factored analytically.  Here again, we are able to show that just two real coefficients $\ell_1^{(1)}$ and $\ell_1^{(2)}$ for the linear terms $\ka$ suffice, as long as they are distinct,
\begin{equation}
\begin{aligned}\label{expansions3}
&\ell\kw=\left(\om+\ell_1^{(1)}\ka+\ell_2^{(1)}\ka^2
     +\cdots\right)\left(\om+\ell_1^{(2)}\ka+\ell_2^{(2)}\ka^2+\cdots\right)(1+O(|\ka|+|\om|)),\\
&a\kw=\left(\om+\ell_1^{(1)}\ka+r_2^{(1)}\ka^2+\cdots\right)
  \left(\om+\ell_1^{(2)}\ka+r_2^{(2)}\ka^2+\cdots\right)
  \left(r_0e^{i\gamma}+r_{\ka}\ka+r_{\om}\om+\cdots\right),\\
&b\kw=\left(\om+\ell_1^{(1)}\ka+t_2^{(1)}\ka^2+\cdots\right)
  \left(\om+\ell_1^{(2)}\ka+t_2^{(2)}\ka^2+\cdots\right)
  \left(it_0e^{i\gamma}+t_{\ka}\ka+t_{\om}\om+\cdots\right).
\end{aligned}
\end{equation}

\begin{lemma}\label{prop:SecondBasicProperty}
The coefficients $\ell_1^{(1)}$ and $\ell_1^{(2)}$ are real and the coefficients $\ell_2^{(1)}$ and $\ell_2^{(2)}$ have nonnegative imaginary parts.
Moreover, if $\Im(\ell_2^{(1)})>0$ and $\Im(\ell_2^{(2)})>0$,
then, for each $i\in\{1,2\}$, either $r_2^{(i)}$~and~$t_2^{(i)}$ are distinct real numbers or they are equal and not real.
\end{lemma}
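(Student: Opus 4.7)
The plan is to first establish the easy parts (reality of $\ell_1^{(i)}$ and nonnegativity of $\Im(\ell_2^{(i)})$) by repeating the branch-by-branch argument of Lemma \ref{lemma:single}, and then to exploit the two unitarity relations from \eqref{eqn:S_unitary1}, analytically continued to complex $(\ka,\om)$ and restricted to each complex branch of $\ell=0$, to extract the dichotomy for $r_2^{(i)}$ and $t_2^{(i)}$.

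For the first two assertions, each factor in the expansion \eqref{expansions3} of $\ell$ parametrizes an analytic curve $\om = \om_i(\ka) := -\ell_1^{(i)}\ka - \ell_2^{(i)}\ka^2 - O(\ka^3)$ of generalized guided-mode frequencies. Theorem \ref{theorem:GenGM} forces $\Im(\om_i(\ka)) \leq 0$ for small real $\ka$; testing both signs of $\ka$ gives $\ell_1^{(i)}\in\RR$, and the next order then gives $\Im(\ell_2^{(i)})\geq 0$.

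For the dichotomy, define $g^*(\ka,\om) := \overline{g(\bar\ka,\bar\om)}$ for $g\in\{\ell,a,b\}$; these are holomorphic in $(\ka,\om)$ and agree with $\bar g$ on $\RR^2$. The real-analytic unitarity relations $|\ell|^2 = |a|^2+|b|^2$ and $a\bar b+\bar a b=0$ on the real diamond then extend by analytic continuation to the holomorphic identities $\ell\ell^* = aa^*+bb^*$ and $ab^* + a^*b = 0$ near $\kwz\in\CC^2$. Fix $i\in\{1,2\}$, let $j$ denote the other index, and restrict both identities to the (complex) branch $\om=\om_i(\ka)$; since $\ell$ vanishes there the first becomes $aa^* + bb^* = 0$ on the branch. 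The $j$-th factors of $a,b,a^*,b^*$ all reduce to $(\ell_1^{(j)}-\ell_1^{(i)})\ka+O(\ka^2)$, nonzero by the standing hypothesis $\ell_1^{(1)}\ne\ell_1^{(2)}$, and the $i$-th factors become $(r_2^{(i)}-\ell_2^{(i)})\ka^2$ for $a$, $(t_2^{(i)}-\ell_2^{(i)})\ka^2$ for $b$, and, delicately, $(\bar r_2^{(i)}-\ell_2^{(i)})\ka^2$ and $(\bar t_2^{(i)}-\ell_2^{(i)})\ka^2$ for $a^*$ and $b^*$. The key point is that evaluating $a$ at $\overline{\om_i(\ka)}$ produces $\bar\ell_2^{(i)}$, which is then conjugated once more by the outer bar in the definition of $a^*$, leaving $\ell_2^{(i)}$ while turning $r_2^{(i)}$ into $\bar r_2^{(i)}$. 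The leading $\ka^6$ coefficient of $aa^*+bb^*=0$ is then a quadratic in $\ell_2^{(i)}$ with real coefficients---here one uses $r_0^2+t_0^2=1$, itself obtained by evaluating $|\ell|^2=|a|^2+|b|^2$ at $\ka=0$---and the assumption $\Im(\ell_2^{(i)})>0$ forces its discriminant to be strictly negative. A routine algebraic reduction (using $(r_0^2\alpha+t_0^2\beta)^2-(r_0^2\alpha^2+t_0^2\beta^2)=-r_0^2 t_0^2(\alpha-\beta)^2$) identifies this discriminant up to a positive factor with $-[\,r_0^2 t_0^2(\Re r_2^{(i)}-\Re t_2^{(i)})^2 + r_0^2(\Im r_2^{(i)})^2 + t_0^2(\Im t_2^{(i)})^2\,]$, so strict negativity rules out exactly the single degenerate case $r_2^{(i)}=t_2^{(i)}\in\RR$.

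The leading order of the second identity $ab^*+a^*b=0$ on the branch simplifies, after the same substitutions, to
\[\Re(r_2^{(i)})\,\Im(t_2^{(i)}) - \Im(r_2^{(i)})\,\Re(t_2^{(i)}) = \ell_2^{(i)}\bigl(\Im(t_2^{(i)}) - \Im(r_2^{(i)})\bigr).\]
Since the left-hand side is real while $\ell_2^{(i)}\notin\RR$, the coefficient of $\ell_2^{(i)}$ on the right must vanish: $\Im(r_2^{(i)}) = \Im(t_2^{(i)})$. The residual equation then gives either $\Im(r_2^{(i)}) = \Im(t_2^{(i)}) = 0$ (both real, hence distinct by the discriminant argument) or $\Re(r_2^{(i)}) = \Re(t_2^{(i)})$ (equal, hence not real by the same discriminant argument), which is the claimed dichotomy. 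Repeating the same argument with $i$ and $j$ exchanged handles the other branch. The main technical obstacle I anticipate is the bookkeeping of the starred factors in the preceding paragraph: a single misplaced conjugation, e.g.\ $\bar\ell_2^{(i)}$ in place of $\ell_2^{(i)}$, collapses the reality structure of the leading-order quadratic and would destroy the argument.
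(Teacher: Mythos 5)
The paper itself omits the proof of this lemma (``We shall omit the proofs, which are technical''), so the natural benchmark is the Appendix proof of the analogous single-anomaly Lemma~\ref{prop:singlesym_alt}, and your argument is a genuinely different---and, for the two-branch setting, arguably cleaner---route to the same dichotomy. The paper's method expands the real-analytic identities $|\ell|^2-|a|^2-|b|^2=0$ and $\Re(a\bar b)=0$ in powers of $\ka$ and $\om$, solves a small linear system for the unit-factor coefficients, and then restricts along the real line $\om+\ell_1\ka=0$ to extract $r_2+\bar t_2\in\RR$, $r_2\bar t_2\in\RR$, and $|\ell_2|^2=r_0^2|r_2|^2+t_0^2|t_2|^2$. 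You instead complexify the unitarity relations via $g^*(\ka,\om)=\overline{g(\bar\ka,\bar\om)}$ and restrict to the complex branch $\om=\om_i(\ka)$ on which $\ell$ vanishes identically; this kills the $\ell\ell^*$ term outright, decouples the two branches (each $j$-th factor contributes only the nonzero constant $\ell_1^{(j)}-\ell_1^{(i)}$ at leading order, by the standing hypothesis that the roots are distinct), and packages energy conservation as a monic quadratic with real coefficients having $\ell_2^{(i)}$ as a root. I verified the delicate conjugation bookkeeping for $a^*$ (the $i$-th factor is indeed $(\bar r_2^{(i)}-\ell_2^{(i)})\ka^2+O(\ka^3)$, not $(\bar r_2^{(i)}-\bar\ell_2^{(i)})\ka^2$), the algebraic identity $(r_0^2\alpha+t_0^2\beta)^2-(r_0^2\alpha^2+t_0^2\beta^2)=-r_0^2t_0^2(\alpha-\beta)^2$, the resulting discriminant $-\bigl[r_0^2t_0^2(\Re r_2^{(i)}-\Re t_2^{(i)})^2+r_0^2(\Im r_2^{(i)})^2+t_0^2(\Im t_2^{(i)})^2\bigr]$, and the final case split from $ab^*+a^*b=0$; each step is correct and together they deliver exactly the claimed alternative, with the first two assertions following from Theorem~\ref{theorem:GenGM} as you say. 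What your approach buys is that the branches are handled independently and symmetrically, avoiding the cross terms that a direct expansion of $|\ell|^2-|a|^2-|b|^2$ with quartic leading factors would produce. One small caveat: you take the normalization in \eqref{expansions3} as given---in particular that the linear coefficients of the factors of $a$ and $b$ coincide with $\ell_1^{(i)}$---and the cancellation of the linear terms on the branch depends on this; since the lemma's statement refers to the coefficients of \eqref{expansions3}, which the paper asserts just beforehand, this is a legitimate starting point, but it is worth flagging that this matching of linear coefficients is itself a nontrivial claim proved elsewhere.
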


Subject to structural symmetry with respect to $z$, one can prove a theorem analogous to those for the previous cases.  It says in particular that, if the first alternative in the last statement of the lemma holds for $i=1$ and $i=2$, then two peak-dip anomalies emerge from $\omega_0$ as $\kappa$ is perturbed from $\kappa_0$, each of which attains the values 0 and 1 along real-analytic curves passing through $\kwz$.

\begin{theorem}
Suppose that $\Im(\ell_2^{(i)})>0$ for all $i\in\{1,2\}$.\\
(i) If $r_2^{(i)}\not=t_2^{(i)}$ with $i\in\{1,2\}$, then $r_n^{(i)}$ and $t_n^{(i)}$ are real for all $n$.\\
(ii) If $r_2^{(i)}=t_2^{(i)}$ for all $i\in\{1,2\}$, then $a/\ell$ and $b/\ell$ are continuous at $\kwz$.
\end{theorem}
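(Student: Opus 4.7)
The plan is to mirror the proof of Theorem~\ref{Thm:Main}, adapted to the doubled Weierstra{\ss} structure of~\eqref{expansions3}. The key new ingredient is that the reality of the $\ell_1^{(i)}$ (Lemma~\ref{prop:SecondBasicProperty}) together with the distinctness $\ell_1^{(1)}\neq\ell_1^{(2)}$ underlying~\eqref{expansions3} ensures that, on the zero curve of one of the first factors of $a$, the \emph{other} first factor survives with a real, nonzero leading term of order $\ka$; this is what makes the induction tractable.

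For part~(i), I argue by simultaneous induction on $n$ to establish $r_n^{(i)},t_n^{(i)}\in\RR$ for both $i\in\{1,2\}$. The base case $n=2$ follows from Lemma~\ref{prop:SecondBasicProperty}, together with the case-(i) hypothesis $r_2^{(i)}\ne t_2^{(i)}$, which by the direct analogue in this setting rules out the non-real alternative and thus places each $r_2^{(i)},t_2^{(i)}$ in $\RR$. For the inductive step at level $n\ge2$, fix $i=1$ and restrict to the real-analytic curve $\om+\ell_1^{(1)}\ka+r_2^{(1)}\ka^2+\cdots+r_n^{(1)}\ka^n=0$, which is real by the inductive hypothesis. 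Substituting this relation into the four factors of~\eqref{expansions3} gives
\[
A_1=r_{n+1}^{(1)}\ka^{n+1}+O(\ka^{n+2}),\qquad A_2=(\ell_1^{(2)}-\ell_1^{(1)})\ka+O(\ka^2),
\]
\[
B_1=(t_2^{(1)}-r_2^{(1)})\ka^2+O(\ka^3),\qquad B_2=(\ell_1^{(2)}-\ell_1^{(1)})\ka+O(\ka^2),
\]
with each displayed leading coefficient real and nonzero---using $\ell_1^{(1)}\ne\ell_1^{(2)}$ for $A_2,B_2$ and $r_2^{(1)}\ne t_2^{(1)}$ for $B_1$. Multiplying out and tracking leading powers of $\ka$ yields
\[
\frac{a}{b}=\frac{r_0\,r_{n+1}^{(1)}}{it_0(t_2^{(1)}-r_2^{(1)})}\,\ka^{n-1}+O(\ka^n).
\]
Unitarity~\eqref{eqn:S_unitary1} forces $a/b\in i\RR$ for real $(\ka,\om)$, so the leading coefficient must be purely imaginary; since everything else in it is real, this gives $r_{n+1}^{(1)}\in\RR$. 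Swapping the roles of the first factors of $a$ and $b$ (restricting instead to the truncated zero curve of $B_1$), and independently swapping the indices $1\leftrightarrow 2$, produces the remaining three statements $t_{n+1}^{(1)},r_{n+1}^{(2)},t_{n+1}^{(2)}\in\RR$ and closes the induction.

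For part~(ii), I follow the template of the proof of Theorem~\ref{Thm:Main}(iii). The analogue of Lemma~\ref{lemma:single}(iv) in the present setting, derivable from $\Im(\ell_2^{(i)})>0$, the case-(ii) hypothesis $r_2^{(i)}=t_2^{(i)}$, and the unitarity identity $|\ell|^2=|a|^2+|b|^2$ on real pairs, gives $t_2^{(i)}\in\{\ell_2^{(i)},\overline{\ell_2^{(i)}}\}$ for each $i$. Writing
\[
\left|\frac{b\kw}{\ell\kw}\right|=|t_0|\prod_{i=1}^{2}\left|\frac{\om+\ell_1^{(i)}\ka+t_2^{(i)}\ka^2+\cdots}{\om+\ell_1^{(i)}\ka+\ell_2^{(i)}\ka^2+\cdots}\right|\bigl(1+o(1)\bigr),
\]
I observe that, in either case $t_2^{(i)}=\ell_2^{(i)}$ or $t_2^{(i)}=\overline{\ell_2^{(i)}}$, the leading terms of the numerator and denominator of the $i$th quotient are equal or complex conjugates, so its modulus tends to $1$ along every real sequence $(\ka,\om)\to(0,0)$; the lower bound $|\om+\ell_1^{(i)}\ka+\ell_2^{(i)}\ka^2|\ge|\Im(\ell_2^{(i)})|\ka^2$ (valid because $\ell_1^{(i)}\in\RR$) controls the higher-order remainders uniformly. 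Hence $|b/\ell|\to|t_0|$, and the analogous computation yields $|a/\ell|\to|r_0|$.

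The principal technical obstacle is ensuring in part~(i) that the a~priori complex higher-order tails of $A_2,B_2$ and of the analytic correction factors $r_0e^{i\gamma}+O(|\ka|+|\om|)$, $it_0e^{i\gamma}+O(|\ka|+|\om|)$ do not contaminate the extraction of the imaginary-part constraint. They do not, because only the linear-in-$\ka$ leading coefficients of $A_2$ and $B_2$ enter the coefficient of $\ka^{n-1}$ in $a/b$; all other contributions fall into the remainder $O(\ka^n)$, against which the coefficient-by-coefficient consequence of $a/b\in i\RR$ still forces the sought reality. A secondary bookkeeping point is that four coefficients must be produced at each level, but each is handled by the same argument applied to one of the four truncated zero curves, so no circular dependency arises.
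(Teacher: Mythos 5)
Your proposal cannot be checked against the paper's own argument because the paper deliberately omits it: Section~\ref{Section:Multiple} states ``We shall omit the proofs, which are technical'' and only asserts that ``one can prove a theorem analogous to those for the previous cases.'' What you have written is precisely that analogue, and it appears to be correct. The one genuinely new ingredient beyond the proof of Theorem~\ref{Thm:Main} is the point you isolate: on the truncated zero curve of the \emph{first} factor of $a$, the \emph{second} factors of both $a$ and $b$ contribute a common leading term $(\ell_1^{(2)}-\ell_1^{(1)})\ka$, real and nonzero because $\ell_1^{(1)},\ell_1^{(2)}$ are real (Lemma~\ref{prop:SecondBasicProperty}) and distinct (the hypothesis under which the factorization \eqref{expansions3} is valid), so their ratio tends to $1$ and the extraction of $r_{n+1}^{(1)}\in\RR$ from $a/b\in i\RR$ goes through exactly as in Theorem~\ref{Thm:Main}(i); the four-curve bookkeeping is free of circularity as you say. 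Two caveats. First, part (ii) needs more than Lemma~\ref{prop:SecondBasicProperty} literally provides: you must show $t_2^{(i)}\in\{\ell_2^{(i)},\overline{\ell_2^{(i)}}\}$ in the equal case, which you correctly flag as the analogue of Lemma~\ref{prop:singlesym_alt}(ii); it does follow by restricting $|\ell|^2=|a|^2+|b|^2$ to the curve $\om+\ell_1^{(i)}\ka=0$ and extracting $\Re(\ell_2^{(i)})=r_0^2\Re(r_2^{(i)})+t_0^2\Re(t_2^{(i)})$ and $|\ell_2^{(i)}|^2=r_0^2|r_2^{(i)}|^2+t_0^2|t_2^{(i)}|^2$, but this step deserves to be written out since it is not in the quoted lemma. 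Second, your argument (like the paper's Theorem~\ref{Thm:Main}(iii)) establishes continuity of $|a/\ell|$ and $|b/\ell|$, not of $a/\ell$ and $b/\ell$ themselves as the theorem literally states; in the conjugate case $t_2^{(i)}=\overline{\ell_2^{(i)}}$ the ratio itself has direction-dependent limits, so the theorem's phrasing should be read (or corrected) as a statement about the moduli, i.e.\ about the transmittance.
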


%%%%%%%%%%%%%%%%%%%%%%%%%%%%%%%%%%%%%%%%%%%%%%%%%%%%%%%
%%%%%%%%%%%%%%%%%%%%%%%%%%%%%%%%%%%%%%%%%%%%%%%%%%%%%%%
\section{Transmission graphs}\label{Section:Graphs}

We demonstrate resonant transmission anomalies with various choices of the coefficients in the expansions (\ref{eqn:single_expansion},\ref{eqn:mixed_expansion},\ref{expansions3}) of $\ell$, $a$, and $b$.  We graph the transmittance
\begin{equation}
|T(\kappa,\omega)|^2=\left|\frac{b(\kappa,\omega)}{\ell(\kappa,\omega)}\right|^2 =\frac{|b|^2}{|a|^2+|b|^2},
\end{equation}
retaining terms up to quadratic order in $\ka=\kappa-\kappa_0$ in the factors that vanish at $\kwz$ and only the constant in the nonzero factor.
 An error estimate of ${\cal O}(|\ka|+\om^2)$ for these approximations in the generic case is proved in \cite[Thm.~16]{PtitsynaShipman2011}.
First we demonstrate the nondegenerate case of Section \ref{sec:generic}, in which $r_2$ and $t_2$ are distinct real numbers (Fig.~\ref{fig:SingleRealTwo}, \ref{fig:SingleRealOne}).  For $\kappa=\kappa_0$ ($\ka=0$) the anomaly is absent; it widens quadratically in $\ka$, namely as $|t_2\!-\!r_2|\ka^2$.  If $\ell_1\not=0$, as in Fig.~\ref{fig:SingleRealOne}, then the anomaly is detuned from $\omega_0$ at a linear rate in $\ka$, whereas it widens only quadratically.

Figures \ref{fig:MixRealOne1} and \ref{fig:MixRealThree1} show the degenerate case in which the anomaly is a single dip descending to zero from a background of full transmission or a single peak rising to unity from a background of no transmission (Section~\ref{sec:totalbackground}).  In the former case, for example, full transmission is actually achieved at precisely two frequencies near $\omega=\omega_0$ ($\om=0$); this can be seen clearly in the magnified, right-hand image of Fig.~\ref{fig:MixRealThree1}.

When the first derivative with respect to $\omega$ of all three functions $\ell$, $a$, and $b$ vanishes at $\kwz$, an anomaly characterized by a pair of peak-dip features can emanate from $\omega_0$, as discussed in Section~\ref{sec:multiple}.  Figures \ref{fig:SecondOrderRealOne1}  and \ref{fig:SecondOrderRealTwo1} illustrate two possible choices of constants in the Weierstra{\ss} expansions.  At this point there are no rigorous results on which sets of constants are admissible or prohibited by periodic slabs or any other open waveguide system.

\smallskip
The vertical line in all graphs shows the location of $\om=0$, or $\omega=\omega_0$.

\begin{figure}
\centering
    \includegraphics[scale=0.9]{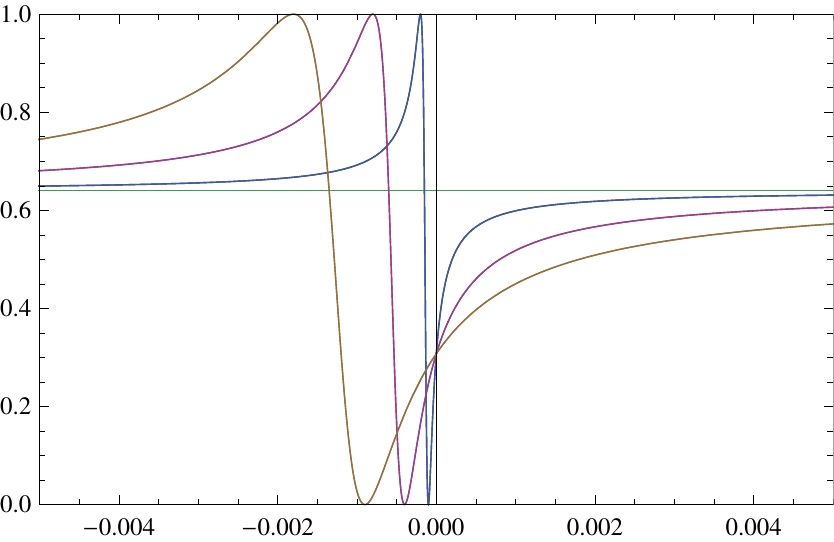}\quad
    \includegraphics[scale=0.9]{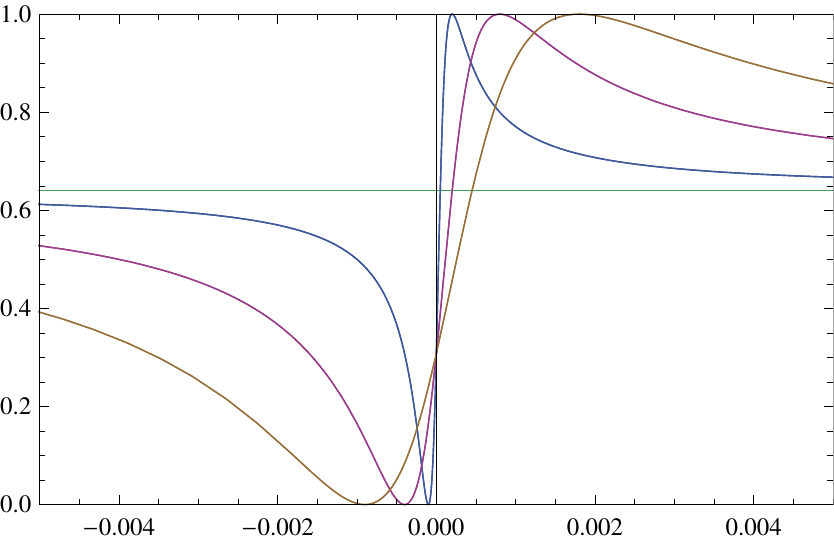}
    \caption{ \small
      $|T|^2$ as a function of $\om$ for $\ka=0,\pm0.01,\pm0.02,\pm0.03$.  The generic conditions $\frac{\partial\ell}{\partial\omega}, \frac{\partial a}{\partial\omega},\frac{\partial b}{\partial\omega}\ne0$ are satisfied at the bound-state pair $(\kappa_0,\omega_0)$.  In (\ref{azeroset},\ref{bzeroset}), $\ell_1=0$ so that there is no linear detuning of the anomaly with $\ka$.  Left: $0<t_2=1<r_2=2$ so that the peak is to the left of the dip and both are to the left of $\omega_0$. \,Right: $r_2=-2<0<t_2=1$.  In both graphs, $r_0=0.6$, $t_0=0.8$.  The transmission is symmetric in~$\ka$, and the curve without an anomaly (the horizontal curve) is the transmission graph for $\ka=0$.}
    \label{fig:SingleRealTwo}
\end{figure}
\begin{figure}
\centering
    \includegraphics[scale=0.9]{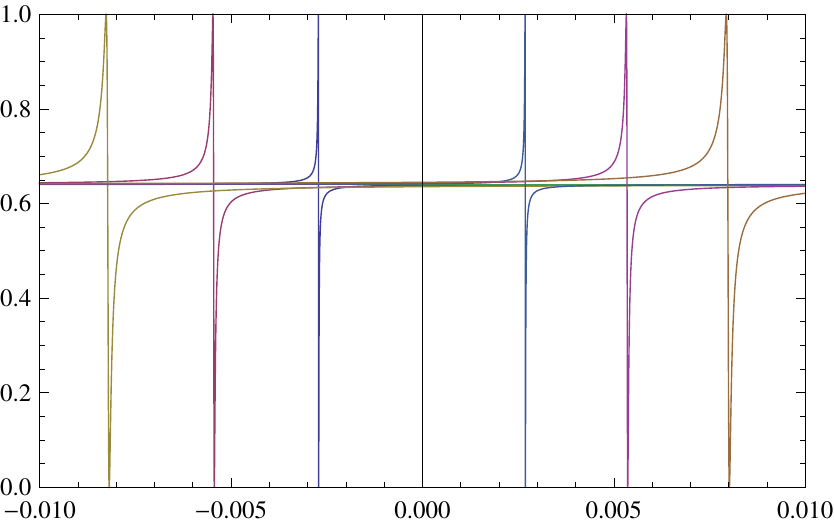}
    \caption{ \small
      $|T|^2$ as a function of $\om$ for $\ka=0,\pm0.003,\pm0.006,\pm0.009$.
      The generic conditions $\frac{\partial\ell}{\partial\omega},
       \frac{\partial a}{\partial\omega},\frac{\partial b}{\partial\omega}\ne0$ are satisfied at the bound-state pair $(\kappa_0,\omega_0)$.
       In (\ref{azeroset},\ref{bzeroset}), $\ell_1=0.9\not=0$, so the anomaly is detuned from $\omega=\omega_0$ ($\om=0$) in a linear manner in $\ka$.
       The coefficients $r_2=2$ and $t_2=1$ of $\ka^2$ are distinct real numbers, and $(r_0,t_0)=(0.6,0.8)$.
       The graph for $\ka=0$ is horizontal at $|T|^2\approx0.64$.
     }
    \label{fig:SingleRealOne}
\end{figure}

\begin{figure}
\centering
    \includegraphics[scale=0.9]{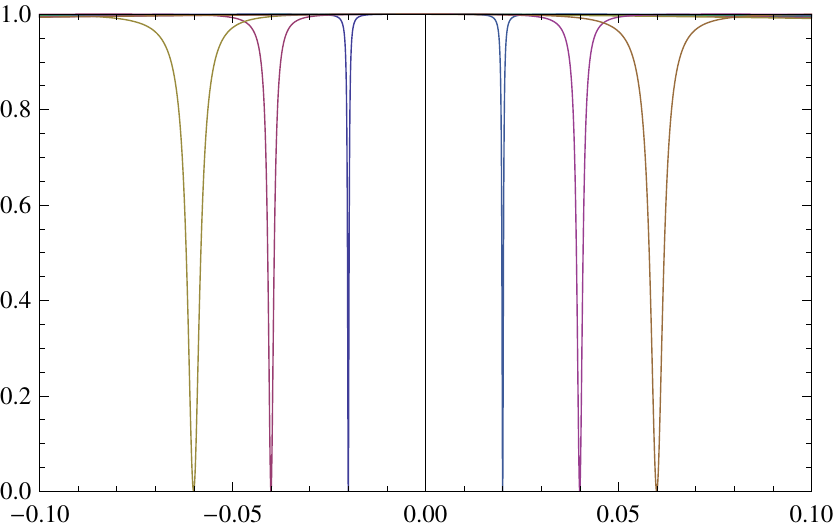}\quad
    \includegraphics[scale=0.9]{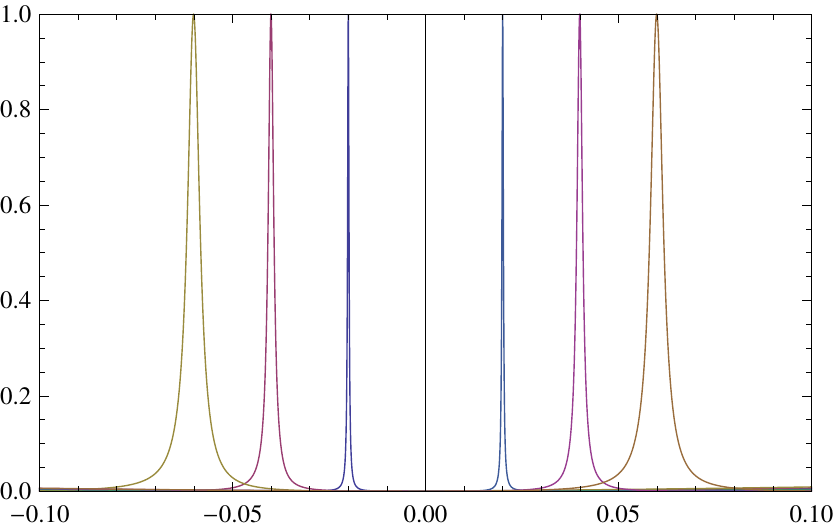}
    \caption{ \small
    $|T|^2$ as a function of $\om$ for $\ka=0,\pm0.01,\pm0.02,\pm0.03$.
      Left: Full background transmission occurs when $\frac{\partial\ell}{\partial\omega}\not=0$, $\frac{\partial a}{\partial\omega}=0$, and $\frac{\partial b}{\partial\omega}\neq0$ at $(\kappa_0,\omega_0)$.  In (\ref{eqn:mixed_expansion}),
       $0<r_1^{(1)}=0.2<t_1=\ell_1=2<r_1^{(2)}=4$,
       $(r_2^{(1)},r_2^{(2)},t_2)=(7, 7, 0.1)$, and $r_0=0.6$.
       For $\ka=0$, $|T|^2\approx1$.
      Right: $\frac{\partial\ell}{\partial\omega}\not=0$, $\frac{\partial a}{\partial\omega}\not=0$, and $\frac{\partial b}{\partial\omega}=0$ at $(\kappa_0,\omega_0)$ and $0<t_1^{(1)}<r_1=\ell_1<t_1^{(2)}$.      
For $\ka=0$, $|T|^2\approx0$. }
    \label{fig:MixRealOne1}
 \end{figure}

\begin{figure}
\centering
    \includegraphics[scale=0.86]{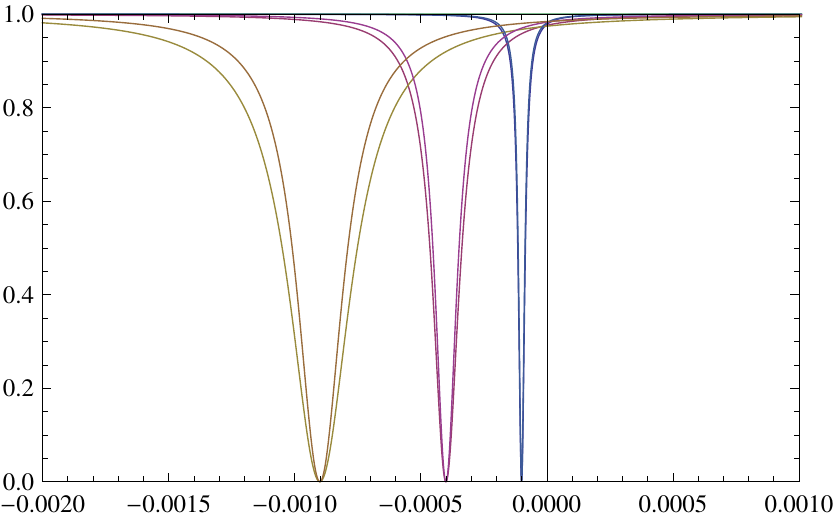}\quad
    \includegraphics[scale=0.90]{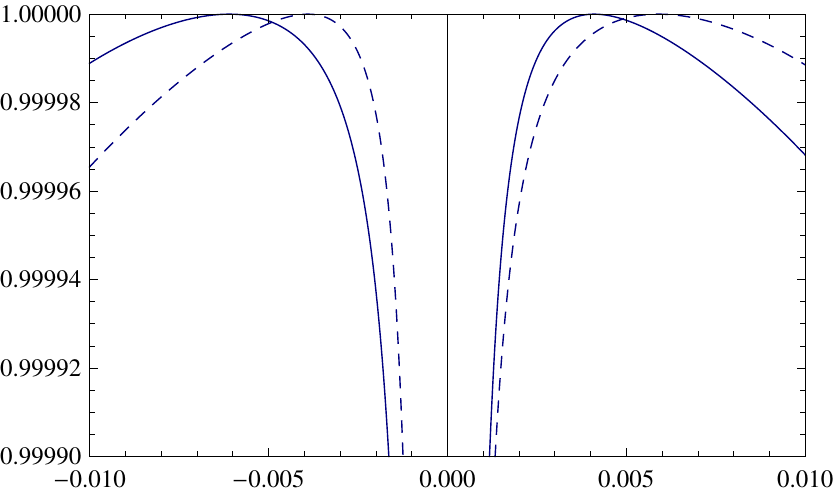}
    \caption{ \small
     $|T|^2$ as a function of $\om$ for $\ka=0,\pm0.01,\pm0.02,\pm0.03$.
Left: Full background transmission occurs when $\frac{\partial\ell}{\partial\omega}\not=0$, $\frac{\partial a}{\partial\omega}=0$, and $\frac{\partial b}{\partial\omega}\neq0$ at $(\kappa_0,\omega_0)$.  In (\ref{eqn:mixed_expansion}),
       $r_1^{(1)}=-0.04<t_1=\ell_1=0<r_1^{(2)}=0.06$;
       $(r_0,r_2^{(1)},r_2^{(2)},t_2)=(0.6,-1, 1, 1)$.
              For $\ka=0$, $|T|^2\approx1$.
      Right: Magnification of the graphs for $\ka=\pm0.01$, bringing into view the frequencies of total transmission.}
    \label{fig:MixRealThree1}
 \end{figure}

\begin{figure}
\centering
    \includegraphics[scale=0.85]{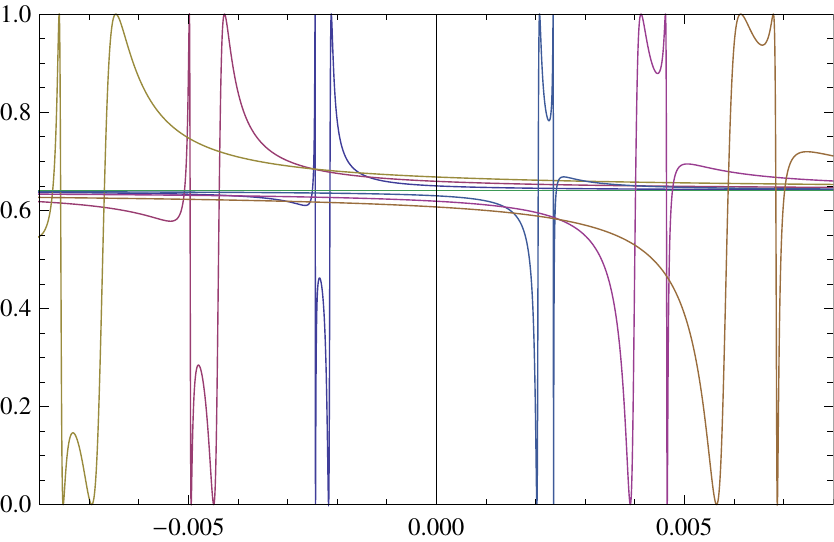} \quad
    \includegraphics[scale=0.855]{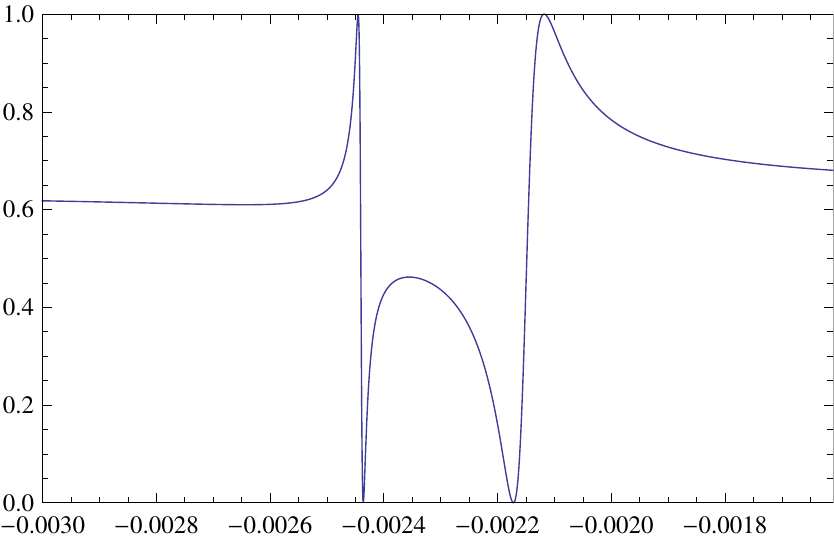}
    \caption{ \small
     Left: $|T|^2$ as a function of $\om$ for $\ka=0,\pm0.003,\pm0.006,\pm0.009$.
     The partial derivatives of $\ell$, $a$, and $b$ all vanish at $\kwz$, whereas their second derivatives are nonzero.
     In (\ref{expansions3}), $(\ell_1^{(1)},\ell_1^{(2)}) = (0.7,0.8)$,
     $(r_0, t_0)=(0.6, 0.8)$,
     $r_2^{(1)}=2<t_2^{(1)}=8$ and $t_2^{(2)}=4<r_2^{(2)}=5$.
     For $\ka=0$, the graph is horizontal, just above $|T|^2=0.6$.
     Right: $\ka=0.003$.   }
    \label{fig:SecondOrderRealOne1}
 \end{figure}
 \begin{figure}
 \centering
    \includegraphics[scale=0.9]{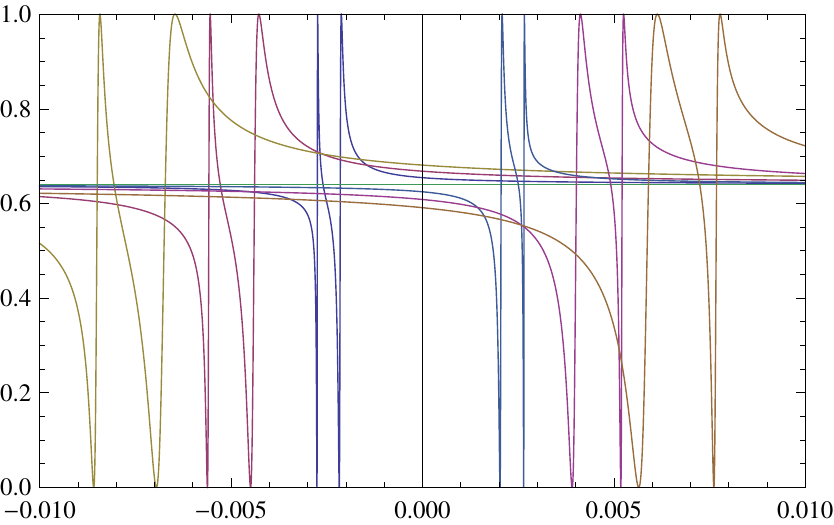}\quad
    \includegraphics[scale=0.877]{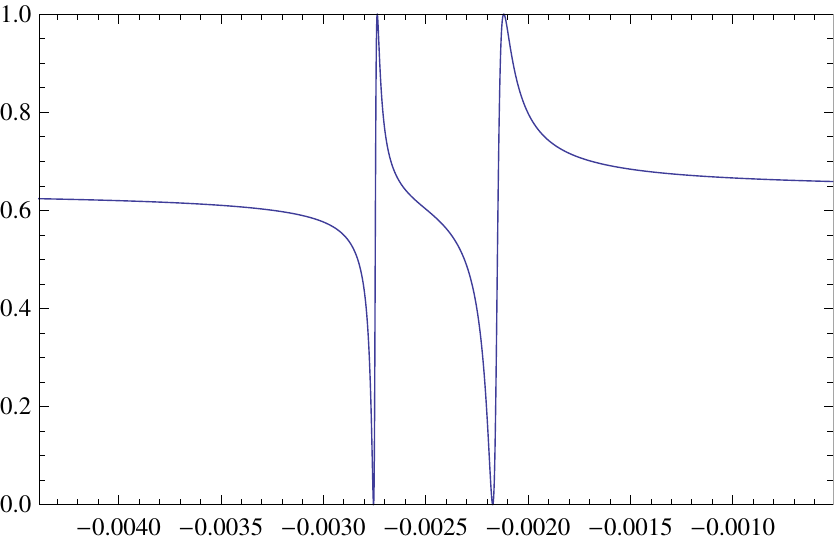}
    \caption{ \small
     Left: $|T|^2$ as a function of $\om$ for $\ka=0,\pm0.003,\pm0.006,\pm0.009$.
     The partial derivatives of $\ell$, $a$, and $b$ all vanish at $\kwz$, whereas their second derivatives are nonzero.
     In (\ref{expansions3}), $(\ell_1^{(1)},\ell_1^{(2)}) = (0.7,0.8)$,
     $(r_0, t_0)=(0.6, 0.8)$,
     $r_2^{(1)}=2<t_2^{(1)}=8$ and $r_2^{(2)}=4<t_2^{(2)}=6$.
          For $\ka=0$, the graph is horizontal, just above $|T|^2=0.6$.
     Right: $\ka=0.003$. }
    \label{fig:SecondOrderRealTwo1}
 \end{figure}

\clearpage

\appendix
\section{Appendix}

%%%%%%%%%%%%%%%%%%%%%%%%%%%%%%%%%%%%%%%%%%%%%%%%%%%
%%%%%%%     APPENDIX
%%%%%%%%%%%%%%%%%%%%%%%%%%%%%%%%%%%%%%%%%%%%%%%%%%%

\begin{proof}[Proof of Lemma \ref{lemma:analyticityC1}]
To prove $C_1$ is analytic with respect to $\omega$
at $(\kappa_0,\omega_0)$, we let $\omega=\omega_0+\Delta\omega$ and show that
\[
\lim_{\Delta\omega\rightarrow0}
\left\|\frac{C_1(\kappa_0,\omega_0+\Delta\omega)-C_1(\kappa_0,\omega_0)}{\Delta\omega}-C_{1\omega}\right\|=0
\]
in operator norm,
where $C_{1\omega}(\kappa_0,\omega_0):\Honeper\rightarrow \Honeper$ is defined by
\[
(C_{1\omega}u,v)=\frac{1}{\mu_0}
   \sum_m\frac{-i\epsilon_0\mu_0\omega_0}{\sqrt{\ep_0\mu_0\omega_0^2-(m+\kappa_0)^2}}\hat u_m\bar{\hat v}_m,
   \forall u,v\in \Honeper.
\]
The partial derivatives of $\eta_m$ with respect to $\omega$ are
\[
\eta_{m\omega}\kw=\frac{\epsilon_0\mu_0\omega}{\sqrt{\ep_0\mu_0\omega^2-(m+\kappa)^2}},
\]
and the joint analyticity of $\eta_m$ at $(\kappa_0,\omega_0)$ implies that
\[
\begin{split}
\eta_m(\kappa_0,\omega_0+\Delta\omega)&=\eta_m(\kappa_0,\omega_0)+\eta_{m\omega}(\kappa_0,\omega_0)\Delta\omega
  +R^{(2)}_m(\Delta\omega)\\
  &=\eta_m(\kappa_0,\omega_0)+\frac{\epsilon_0\mu_0\omega_0}{\sqrt{\ep_0\mu_0\omega_0^2-(m+\kappa_0)^2}}\Delta\omega
  +R^{(2)}_m(\Delta\omega)
\end{split}
\]
with the remainder term
\[
R^{(2)}_m(\Delta\omega)=\frac{(\Delta\omega)^2}{2\pi i}\int_{{\cal C}_0}\frac{\eta_m(s)}{(s-\omega)(s-\omega_0)^2}ds
\]
in which ${\cal C}_0$ is the circle in the complex plane centered at $\omega_0$ with radius $r_0$ and $2|\Delta\omega|<r_0$.
For any $s\in {\cal C}_0$, $|s-\omega|\ge|r_0-|\Delta\omega||\ge r_0-|\Delta\omega|\ge r_0/2$. We
estimate that for all $m\ne0$, 
\[
\left|R_m^{(2)}(\Delta\omega)\right|\le\frac{|\Delta\omega|^2}{2\pi}\cdot 2\pi r_0\cdot\frac{\sup_{s\in {\cal C}_0}\{|\eta_m(s)|\}}{(r_0-|\Delta\omega|)r_0^2}
 \le \frac{\sup_{s\in {\cal C}_0}\{|\eta_m(s)|\}(|\Delta\omega|)^2}{r_0/2\cdot r_0}.
\]
In the last expression, if $m=0$, $|\eta_m(s)|<C$ for some $C>0$.
If $m\ne0$ and  $\epsilon_0\mu_0\omega_0^2-(m+\kappa)^2\le0$,
$|\eta_m(s)|=\sqrt{(m+\kappa)^2-\epsilon_0\mu_0\omega^2}\le(m+\kappa)\le2m<Cm$ for some $C$;
if $m\ne0$ and $\epsilon_0\mu_0\omega_0^2-(m+\kappa)^2\ge0$, we have
$|\eta_m(s)|\le\sqrt{\epsilon_0\mu_0\omega_0^2}\le C\le Cm$.
To summarize, $|\eta_m(s)|\le C(m+1)$.

For some constants $C',C''>0$,
\[
\begin{split}
&\left|\left(\left[\frac{C_1(\kappa_0,\omega_0+\Delta\omega)-C_1(\kappa_0,\omega_0)}{\Delta\omega}-C_{1\omega}\right]u,v\right)\right|\\
&=\left|\frac{1}{\mu_0}
   \sum_m(-i)
     \left[\frac{(\eta_m(\kappa_0,\omega_0+\Delta\omega)-\eta_m(\kappa_0,\omega_0))}{\Delta\omega}
           -\frac{\epsilon_0\mu_0\omega_0}{\sqrt{\ep_0\mu_0\omega_0^2-(m+\kappa_0)^2}}\right]
     \hat u_m\bar{\hat v}_m  \right|  \\
&=\left|\frac{1}{\mu_0}\sum_m\left[\frac{R^{(2)}_m(\Delta\omega)}{\Delta\omega}\right]\hat u_m\bar{\hat v}_m\right| \le \frac{C}{\mu_0} |\Delta\omega| \left|\sum_m (m+1) \hat u_m\bar{\hat v}_m\right|\\
&\le C'|\Delta\omega| \|u\|_{H^{1/2}(\Gamma)}\|v\|_{H^{1/2}(\Gamma)} \le C'' |\Delta\omega| \|u\|_{\Honeper}\|v\|_{\Honeper}
\end{split}
\]
and
\[
\begin{split}
\lim_{\Delta\omega\rightarrow0}&
\left\|\frac{C_1(\kappa_0,\omega_0+\Delta\omega)-C_1(\kappa_0,\omega_0)}{\Delta\omega}-C_{1\omega}\right\|\\
    &=\lim_{\Delta\omega\rightarrow0}\sup_{u,v\ne0}
    \frac{\left|\left(\left[\frac{C_1(\kappa_0,\omega_0+\Delta\omega)-C_1(\kappa_0,\omega_0)}{\Delta\omega}-C_{1\omega}\right]u,v\right)\right|}
    {\|u\|_{H^1_{per}(\Omega)}\|v\|_{H^1_{per}(\Omega)}}
    =0.
\end{split}
\]
 and hence $C_1$ is analytic with respect to $\omega$.

Now we show that the operator $C_1$ is analytic in $\kappa$.
We prove the limit
\[
\lim_{\Delta\kappa\rightarrow0}\left\|\frac{C_1(\kappa_0+\Delta\kappa,\omega_0)-C_1(\kappa_0,\omega_0)}{\Delta\kappa}
   -C_{1\kappa}\right\|=0
\]
in operator norm, where $C_{1\kappa}$
is defined by
\[
(C_{1\kappa}u,v)=\int_\Omega\frac{1}{\mu}\left[iu\bar{v}_x-iu_x\bar v+2\kappa_0 u\bar v\right]
    +\sum_m\frac{1}{\mu_0}\frac{i(m+\kappa_0)}{\sqrt{\ep_0\mu_0\omega^2_0-(m+\kappa_0)^2}}\hat u_m\bar{\hat v}_m.
\]
The partial derivatives of $\eta_m$ with respect to $\kappa$ are
\[
\eta_{m\kappa}\kw=\frac{-(m+\kappa)}{\sqrt{\epsilon_0\mu_0\omega^2-(m+\kappa)^2}},
\]
and
\[
\eta_m(\kappa_0+\Delta\kappa,\omega_0)-\eta_m(\kappa_0,\omega_0)=
  \frac{-(m+\kappa_0)}{\sqrt{\ep_0\mu_0\omega_0^2-(m+\kappa_0)^2}}\Delta\kappa
   +T_m^{(2)}(\Delta\kappa),
\]
with
$|T^{(2)}_m(\Delta\kappa)|\le D (m+1) |\Delta\kappa|^2$ for some constant $D>0$
and $|\Delta\kappa|$ sufficiently small.
So
\[
\begin{split}
&\left(\left(C_1(\kappa_0+\Delta\kappa,\omega_0)u,v\right)-(C_1(\kappa_0,\omega_0)u,v)\right)\\
&=\int_\Omega\frac{1}{\mu}[\nabla+i(\boldsymbol\kappa_0+\Delta\boldsymbol\kappa)]u
       \cdot[\nabla-i(\boldsymbol\kappa_0+\Delta\boldsymbol\kappa)]\bar v
   -\int_\Omega\frac{1}{\mu}(\nabla+i\boldsymbol\kappa_0)u\cdot(\nabla-i\boldsymbol\kappa_0)\bar v\\
  &\quad +\sum_m\frac{-i}{\mu_0}\left[\eta_m(\kappa_0+\Delta\kappa,\omega_0)
     -\eta_m(\kappa_0,\omega_0)\right]\hat u_m\bar{\hat v}_m\\
&=\int_\Omega\frac{1}{\mu}[\Delta\kappa(iu\bar v_x-iu_x\bar v+2\kappa_0u\bar v)+(\Delta\kappa)^2u\bar v]
  \,+\,\sum_m\frac{-i}{\mu_0}\left[\frac{-(m+\kappa_0)}{\sqrt{\ep_0\mu_0\omega_0^2-(m+\kappa_0)^2}}\Delta\kappa
   +T^{(2)}_m\right]\hat u_m\bar{\hat v}_m
\end{split}
\]
and
\[
\begin{split}
&\left|\left(\frac{\left(C_1(\kappa_0+\Delta\kappa,\omega_0)u,v\right)-\left(C_1(\kappa_0,\omega_0)u,v\right)}{\Delta\kappa}\right)
   -(C_{1\kappa}u,v)\right|\\
   &\quad=\left|\int_\Omega\frac{1}{\mu}u\bar v\Delta\kappa
   +\sum_m \frac{-i}{\mu_0}\frac{T^{(2)}_m}{\Delta\kappa}\hat u_m\bar{\hat v}_m\right|
   \quad\le|\Delta\kappa|\left|\int_\Omega\frac{1}{\mu}u\bar v\right|+\frac{1}{\mu_0|\Delta\kappa|}\left|\sum_m T_m^{(2)}\hat u_m\bar{\hat v}_m\right|\\
   &\quad\le D'|\Delta\kappa|\left(\|u\|_{\Honeper}\|v\|_{\Honeper}+\|u\|_{H^{1/2}(\Gamma)}\|v\|_{H^{1/2}(\Gamma)}\right)
   \quad\le D'' |\Delta\kappa|\|u\|_{\Honeper}\|v\|_{\Honeper}
\end{split}
\]
for some constant $D', D''>0$. Therefore
\[
\begin{split}
\lim_{\Delta\kappa\rightarrow0} &
 \left\|\frac{C_1(\kappa_0+\Delta\kappa,\omega_0)-C_1(\kappa_0,\omega_0)}{\Delta\kappa}
   -C_{1\kappa}\right\| \\
&=\lim_{\Delta\kappa\rightarrow0}\sup_{u,v\in H^1_{per}(\Omega), u,v\ne0}
 \frac{\left|\frac{\left(C_1(\kappa_0+\Delta\kappa,\omega_0)u,v\right)-\left(C_1(\kappa_0,\omega_0)u,v\right)}{\Delta\kappa}
   -(C_{1\kappa}u,v)\right|}{\|u\|_{\Honeper}\|v\|_{\Honeper}} =0
\end{split}
\]
in the operator norm and so $C_1$ is analytic with respect to $\kappa$.

To prove the analyticity of $C_2$ with respect to $\omega$ at $\kwz$, we define an operator $C_{2\omega}\kwz$ by
\[
(C_{2\omega}u,v)_{\Honeper}=-2\omega_0\int_\Omega\ep\, u\bar v
\]
and we have
\[
\begin{split}
&\left( \left[\frac{C_2(\kappa_0,\omega_0+\Delta\omega)-C_2(\kappa_0,\omega_0)}{\Delta\omega}-C_{2\omega}\right]u,v\right)\\
&\quad\quad=\left(-\frac{\omega_0^2+2\omega_0\Delta\omega+\Delta\omega^2-\omega_0^2}{\Delta\omega}+2\omega_0\right)\int_\Omega\epsilon u\bar v
=-\Delta\omega\int_\Omega\epsilon\, u\bar v.
\end{split}
\]
As $\Delta\omega\rightarrow0$, this tends to $0$, and thus $C_2$ is analytic with respect to $\omega$.

The operator $C_2$ does not depend upon $\kappa$.
Because $C_1$ is an analytic automorphism, it has an analytic inverse and, hence, $A$ is analytic.
\end{proof}

\begin{proof}[Proof of Lemma \ref{prop:singlesym_alt}]
We first compare the coefficients in $|\ell|^2-|a|^2-|b|^2=0$ using the expansions (\ref{eqn:single_expansion}) and keeping in mind that $\ell_1=r_1=t_1$.
The coefficients of $\ka^2$, $\ka^2\om$, $\ka\om^2$ and $\om^3$ are
\[
\ell_1
\left[
(\ell_2+\bar \ell_2)+\ell_1(\ell_{\ka}+\ell_{\ka})-r_0^2(r_2+\bar r_2)-\ell_1r_0(r_{\ka}e^{-i\gamma}+\bar r_{\ka}e^{i\gamma})
 -t_0^2(t_2+\bar t_2)+i\ell_1t_0(t_{\ka}e^{-i\gamma}-\bar t_{\ka} e^{i\gamma})
\right],
\]
\[
\begin{split}
(\ell_2+\bar \ell_2)&+2\ell_1(\ell_{\ka}+\bar \ell_{\ka})+\ell_1^2(\ell_{\om}+\bar \ell_{\om})
-r_0^2(r_2+\bar r_2)-2\ell_1r_0(r_{\ka}e^{-i\gamma}+\bar r_{\ka}e^{i\gamma})\\
&-\ell_1^2r_0(r_{\om}e^{-i\gamma}+\bar r_{\om}e^{i\gamma})
-t_0^2(t_2+\bar t_2)-2i\ell_1t_0(t_{\ka}e^{-i\gamma}-\bar t_{\ka}e^{i\gamma})
+i\ell_1^2t_0(t_{\om}e^{-i\gamma}-\bar t_{\om}e^{i\gamma}),
\end{split}
\]
\[
\begin{split}
(\ell_{\ka}+\bar \ell_{\ka})+2\ell_1(\ell_{\om}+\bar \ell_{\om})-r_0(r_{\ka}e^{-i\gamma}
+\bar r_{\ka}e^{i\gamma})-&2\ell_1r_0(r_{\om}e^{-i\gamma}+\bar r_{\om}e^{i\gamma})\\
&+it_0(t_{\ka}e^{-i\gamma}-\bar t_{\ka}e^{i\gamma})
+2i\ell_1t_0(t_{\om}e^{-i\gamma}-\bar t_{\om}e^{i\gamma}),
\end{split}
\]
and
\[
(\ell_{\om}+\bar \ell_{\om})-r_0(r_{\om}e^{-i\gamma}+\bar r_{\om}e^{i\gamma})
   +it_0(t_{\om}e^{-i\gamma}-\bar t_{\om}e^{i\gamma}).
\]
Define the real quantities $A=r_{\ka}e^{-i\gamma}+\bar r_{\ka}e^{i\gamma}$,
$B=i(t_{\ka}e^{-i\gamma}-\bar t_{\ka}e^{i\gamma})$,
$C=r_{\om}e^{-i\gamma}+\bar r_{\om}e^{i\gamma}$,
$D=i(t_{\om}e^{-i\gamma}-\bar t_{\om}e^{i\gamma})$.
Since $\Re(|\ell|^2-|a|^2-|b|^2)=0$, the real parts of the coefficients of $\ka^3$, $\ka^2\om$, $\ka\om^2$, and $\om^3$ are all 0, that is,
\begin{gather*}
\ell_1[2\Re(l_2)+2\ell_1\Re(\ell_{\ka})-2r_0^2\Re(r_2)-\ell_1r_0A-2t_0^2\Re(t_2)+\ell_1t_0B]=0,\\
\begin{split}
2\Re(\ell_2)+2\ell_1\Re(l_{\ka})&+2\ell_1^2\Re(\ell_{\om})-2r_0^2\Re(r_2)-2\ell_1r_0A+\\
&-\ell_1^2r_0C-2t_0^2\Re(t_2)+2\ell_1t_0B+\ell_1^2t_0D=0,
\end{split}\\
2\Re(\ell_{\ka})+4\ell_1\Re(\ell_{\om})-r_0A-2\ell_1r_0C+t_0B+2\ell_1t_0D=0,\\
2\Re(\ell_{\om})-r_0C+t_0D=0.
\end{gather*}
This linear system can be reduced to
\[
\begin{pmatrix}
0 &0&0&0\\
0 &0&0& 0\\
r_0 & t_0 &0 &0\\
0& 0& r_0 & t_0
\end{pmatrix}
\begin{pmatrix}
A\\
B\\
C\\
D
\end{pmatrix}
=
\begin{pmatrix}
\ell_1[2\Re(\ell_2)-2r_0^2\Re(r_2)-2t_0^2\Re(t_2)]\\
2\Re(\ell_2)-2r_0^2\Re(r_2)-2t_0^2\Re(t_2)\\
2\Re(\ell_{\ka})\\
2\Re(\ell_{\om})
\end{pmatrix},
\]
which implies $\Re(\ell_2)-r_0^2\Re(r_2)-t_0^2\Re(t_2)=0$.

The coefficients of $k^3$, $k^2\om$, $k\om^2$, and $\om^3$ in $a\bar b$~are
\begin{gather*}
-i\ell_1r_0t_0(r_2+\bar t_2)+\ell_1\left(r_0\bar t_{\ka}e^{i\gamma}-it_0r_{\ka}e^{-i\gamma}\right),\\
-ir_0t_0(r_2+\bar t_2)+2\ell_1\left(r_0\bar t_{\ka}e^{i\gamma}-it_0r_{\ka}e^{-i\gamma}\right)
   +\ell_1^2\left(r_0\bar t_{\om}e^{i\gamma}-it_0r_{\om}e^{-i\gamma}\right),\\
\left(r_0\bar t_{\ka}e^{i\gamma}-it_0r_{\ka}e^{-i\gamma}\right)
   +2\ell_1\left(r_0\bar t_{\om}e^{i\gamma}-it_0r_{\om}e^{-i\gamma}\right),\\
r_0\bar t_{\om}e^{i\gamma}-it_0r_{\om}e^{-i\gamma}.
\end{gather*}
Since $a\bar b$ is purely imaginary, so are these coefficients, and from the second and third, we obtain
$r_2+\bar t_2\in\mathbb{R}$.

Along the curve $\{(\ka,\om)\in\RR^2:\om+\ell_1\ka=0\}$, the
coefficients of $\ka^4$ in $|\ell|^2-|a|^2-|b|^2$ and in $a\bar b$
are $|\ell_2|^2-r_0^2|r_2|^2-t_0^2|t_2|^2$ and $-ir_0r_2t_0\bar t_2$.  This yields
$|\ell_2|^2=r_0^2|r_2|^2+t_0^2|t_2|^2$ and
$r_2\bar t_2\in\mathbb{R}$.

The relations $r_2+\bar t_2\in\mathbb{R}$ and
$r_2\bar t_2\in\mathbb{R}$ imply that
either $r_2, t_2\in\mathbb{R}$ or $r_2=t_2$.
Because, by Lemma~\ref{lemma:single}, $\ell_2\in\RR\Leftrightarrow r_2=t_2\in\RR$ and because
we assume $\Im(\ell_2)>0$, the numbers $r_2,t_2$ cannot
be identical real numbers.
In the second case with $r_2=t_2\not\in\RR$, from
$\Re(\ell_2)=r_0^2\Re(r_2)+t_0^2\Re(t_2)$,
$|\ell_2|^2=r_0^2|r_2|^2+t_0^2|t_2|^2$, and $r_0^2+t_0^2=1$, we find that
$\Re(\ell_2)=\Re(r_2)=\Re(t_2)$ and $|\Im(\ell_2)|=|\Im(r_2)|=|\Im(t_2)|$.
\end{proof}

\bibliography{ShipmanTu}

\end{document}